\theoremstyle{definition}
\newtheorem{theorem}{Theorem}
\newtheorem*{theorem*}{Statement}
\newtheorem{lemma}{Lemma}
\newtheorem{proposition}[lemma]{Proposition}
\newtheorem{remark}{Remark}
\newtheorem*{condition*}{Condition}
\newtheorem{assumption}{Assumption}
\DeclareMathOperator{\support}{supp}
\DeclareMathOperator{\LCP}{LCP}
\DeclareMathOperator{\PL}{PL}
\newcommand{\1}{\mathbbm 1}
\newcommand{\T}{\top}
\newcommand{\PP}{{{\mathbb P}}} 
\newcommand{\EE}{{{\mathbb E}}} 
\newcommand{\RR}{{{\mathbb R}}}
\newcommand{\mcL}{{\mathscr L}}
\newcommand{\cL}{{\mathcal L}} 
\newcommand{\cN}{{\mathcal N}} 
\newcommand{\cP}{{\mathcal P}}
\newcommand{\oZ}{{\bar Z}}
\newcommand{\ps}[1]{\langle #1 \rangle}
\newcommand{\bs}{\boldsymbol}
\newcommand{\eqlaw}{\stackrel{\cL}{=}}  
\newcommand{\toaslong}{\xrightarrow[N\to\infty]{\text{a.s.}}}
\title[Equilibria of large random Lotka-Volterra systems]{Equilibria of large random Lotka-Volterra systems \\ 
with vanishing species:  
a mathematical approach}
\author[Akjouj et al.]{Imane Akjouj$^{(1)}$, Walid Hachem$^{(2)}$,\\ Myl\`ene Ma\"ida$^{(1),(3)}$, Jamal Najim$^{(2)}$}
\date{\today\\
{\small (1)} Univ. Lille, CNRS, UMR 8524 - Laboratoire Paul Painlevé, F-59000 Lille, France.\\
{\small (2)} IRL CRM-CNRS, International Reserach Lab, Montréal, Canada\\
{\small (3)} Université Gustave Eiffel and CNRS, France. }
\begin{document}

\maketitle

\begin{abstract} 
Ecosystems with a large number of species are often modelled as
Lotka-Volterra dynamical systems built around a large random interaction
matrix. Under some known conditions, a global equilibrium exists and is unique. 
In this article, we rigorously study its statistical properties in the large dimensional regime. 
Such an equilibrium vector is known to be the
solution of a so-called Linear Complementarity Problem (LCP). 
We describe its statistical properties by designing an Approximate Message Passing (AMP) algorithm, a technique that has
recently aroused an intense research effort in the fields of statistical
physics, Machine Learning, or communication theory.  Interaction matrices taken
from the Gaussian Orthogonal Ensemble, or following a Wishart distribution are
considered. Beyond these models, the AMP approach developed in this article has
the potential to describe the statistical properties of equilibria associated to more involved interaction matrix models.
\end{abstract} 

\setcounter{tocdepth}{1}

\section{Introduction}
\label{intro}

\subsection*{Equilibrium of a large Lotka-Volterra system}
In the field of mathematical ecology, Lotka-Volterra (LV) systems of coupled 
differential equations are widely used to model the time evolution of the
abundances of $N$ interacting species within an ecosystem \cite{tak-livre96}.
Such systems take the form 
\begin{equation}\label{eq:LV-system}
\frac{d\bs{x}_N}{dt} (t) = \bs{x}_N(t) \odot 
  \left( \bs{r}_N - \left(I_N - \Sigma_N \right) \bs{x}_N(t) \right), 
  \quad 
 \bs{x}_N(0) \in (0,\infty)^N ,  
\end{equation}
where the vector function $\bs{x}_N : [0,\infty) \to \RR_+^N = [0,\infty)^N$
represents the abundances of the $N$ species, $\odot$ is the componentwise 
product, $\bs{r}_N \in \RR_+^N$ is the so-called vector of intrinsic growth rates of the species, and 
$\Sigma_N =(\Sigma_{ij}) \in \RR^{N\times N}$ represents the interaction matrix. More precisely $\Sigma_{ij}$ represents the effect of species $j$ on the growth of species $i$. Equivalently, \eqref{eq:LV-system} can be written as a series of coupled ordinary differential equations:
$$
\frac{dx_i}{dt} (t) = x_i(t) \left( r_i - x_i(t) +\left(\Sigma_N\bs{x}_N\right)_i(t) \right)\ ,\quad x_i(0)>0\ ,\quad 1\le i\le N\,,
$$
where $\bs{x}_N=(x_i)$ and $\bs{r}_N=(r_i)$.

In theoretical ecology, the interaction matrix $\Sigma_N$ and the vector $\bs{r}_N$ are often modelled as random when the number $N$ of species is large, turning 
the ecological system into a large disordered system.
Such systems have aroused an important
amount of research in the fields of mathematical ecology, borrowing
tools from statistical physics, high dimensional probability, or random matrix
theory \cite{akj-etal-(arxiv)22}.  

In this paper, we shall be interested in the situation where the LV dynamical
system is well-defined for all $t\in \RR_+$ and
possesses an unique globally stable equilibrium vector: $$
\bs{x}_N^\star = \left( x^{\star}_i\right)_{i=1}^N\qquad \textrm{with}\qquad \bs{x}_N(t) \xrightarrow[t\to \infty]{} \bs{x}^\star_N\, .
$$
It is well-known that the property $\bs{x}_N(0)\in (0,\infty)^N$ is maintained for all $t>0$ and $\bs{x}_N(t)\in (0,\infty)^N$. However, in general, the
equilibrium vector $\bs{x}_N^\star$ may lie at the boundary of $\RR_+^N$, i.e. may have vanishing components. 
Moreover, assuming that $\Sigma_N$ and $\bs{r}_N$ are random, the vector $\bs{x}_N^\star$ is random as well.

When $N$ becomes large, it is of interest to understand the statistical properties of $\bs{x}^\star_N$ such as for example its proportion of non-zero components, or the distribution of $\bs{x}^\star_N$'s components, encoded in the empirical probability measure
$$
\mu^{\bs{x}_N^\star} = \frac 1N \sum_{i=1}^N
 \delta_{x^{\star}_i}\ ,
$$ 
where $ \delta_a$ stands for the Dirac measure at $a$.
Measure $\mu^{\bs{x}_N^\star}$ is a random measure defined on the
probability space of $\bs{r}_N$ and $\Sigma_N$.




Among the classical interaction matrix models considered in the literature
devoted to large LV systems are the Gaussian Orthogonal Ensemble (GOE) model,
the real Ginibre model (i.i.d.~centered Gaussian entries for $\Sigma_N$), or
the so-called elliptical model, that can be seen as an interpolation between
the GOE and the real Ginibre models \cite{all-tan-12}. For these models,
feasible equilibria where $x_i^\star>0$ for $1\le i\le N$ are
studied in \cite{bizeul2021positive,cle-fer-naj-22,akjouj2022feasibility,clenet2023equilibrium}. 

The large-$N$ properties of $\bs{x}_N^\star$ were
recently considered in the theoretical ecology literature.  In \cite{bun-17}, Bunin considered
a non-centered elliptical model with the help of the dynamical cavity method. A similar result
was obtained by Galla in \cite{gal-18} by means of generating functionals
techniques, see also \cite{opp-die-92,tok-04}.  Many insights are provided by these techniques from a physicist point of view.
However, up to our knowledge, no rigorous method to describe the asymptotic properties of $\bs{x}_N^\star$ can be found in the literature so far. 

The purpose of this paper is to address this question in the case where the interaction matrix is either taken from the GOE or follows a Wishart distribution. Our results on the asymptotics of $\mu^{\bs{x}_N^\star}$ mathematically confirm Bunin and Galla's works. 

\subsection*{Linear Complementarity Problem}
When it exists, the globally stable equilibrium
$\bs{x}_N^\star=(x_i^\star)$ of the LV equation above is known to be the solution of a so-called
Linear Complementarity Problem (LCP), see for instance \cite[Chap. 3]{tak-livre96}, which consists in
finding a vector with real entries that satisfies a system of inequalities involving
matrix $\Sigma_N$ and vector $\bs{r}_N$:
\begin{equation}\label{eq:LCP-first-sight}
\begin{cases}
x^\star_i &\ge 0,\\x^\star_i \left( r_i - \left[(I_N-\Sigma_N)\bs{x}^\star_N\right]_i\right)&=0,\\
r_i - \left[(I_N-\Sigma_N)\bs{x}^\star_N\right]_i &\le 0,
\end{cases}\qquad \textrm{for all}\ i\in \{ 1,\cdots, N\}\,.
\end{equation}
The two first conditions are natural for an equilibrium to system \eqref{eq:LV-system}. The third one is necessary for its stability and also admits an ecological interpretation related to the notion of non-invasibility. Sufficient conditions on $\Sigma_N$ to ensure existence and uniqueness of the solution $\bs{x}^\star_N$ are known. The problem boils down to the following question: how can we asymptotically extract statistical information on $\bs{x}^\star_N$, solution to the highly non-linear problem \eqref{eq:LCP-first-sight}, given that $\Sigma_N$ and $\bs{r}_N$ are random? 

The reader is referred to
Section~\ref{subsec-lcp} below for a quick overview of the LCP theory, and to
\cite{cot-pan-sto-livre09,murty1988linear} for complete and comprehensive expositions. 


\subsection*{Approximate Message Passing}
The idea we develop in this paper is that the distribution
$\mu^{\bs{x}_N^\star}$ can be estimated for large $N$ by designing a proper Approximate
Message Passing (AMP) algorithm. 

Approximate Message Passing (AMP) is a technique that has
recently aroused an intense research effort in the fields of statistical
physics, machine learning, high-dimensional statistics and communication theory. 
Among the many landmark articles, we can cite
\cite{donoho2009message}, \cite{bay-mon-11}, \cite{bol-14}. More references can be found in the recent tutorial \cite{fen-etal-(now)22}. 

An AMP algorithm produces a sequence of $\RR^N$--valued
random vectors, say $\bs{\xi}^k=(\xi^k_i)$, which are iteratively built around a $N\times N$ random matrix, sometimes called the measurement matrix. This algorithm is conceived in such a way that for any finite collection $\bs{\xi}^1,\cdots, \bs{\xi}^k$ of these vectors, the following joint empirical 
distribution:
$$
\frac 1N \sum_{i=1}^N  \delta_{\xi^1_i,\cdots,\xi^k_i}
$$
converges as $N\to\infty$ to a Gaussian distribution on $\RR^k$ whose parameters can be fully characterized by the
so-called Density Evolution (DE) equations. In the context of our LV
equilibrium problem, it turns out that an AMP algorithm can be designed in such
a way that the AMP iterates approximate our LCP solution after an adequate
transformation. Thanks to this approximation, the asymptotic properties of 
$\mu^{\bs{x}_N^\star}$ can be deduced from the DE equations. 

\subsection*{Random matrix model and perspectives} Regarding the statistical model for $\Sigma_N$, we shall consider in this paper
the GOE model \cite{all-tan-12}, and the so-called Wishart model.  The latter
is a particular case of a kernel matrix, which is considered when the
interaction between two species depends on a distance between the values of
some functional traits attached to these species, see \cite[\S
4.6]{akj-etal-(arxiv)22} and the references therein, or the recent paper
\cite{roz-cru-gal-(arxiv)23}. Both models rely on Gaussian random variables, see Assumptions \ref{ass:goe}-\ref{ass:Wishart-matrix}, but we also provide results beyond the Gaussian case, see Assumptions \ref{ass:univ-goe}-\ref{ass:univ-wish}.

We believe that this LCP/AMP approach for studying $\mu^{\bs{x}^\star_N}$ can
be generalized and applied to more complex models for the interaction matrix
$\Sigma_N$. 
For instance, the recent results of Fan \cite{fan-22} might be used to cover
the general rotationally invariant case; more general models are also considered in
\cite{bay-lel-mon-15,wan-zho-fan-(arxiv)22}. Matrices with a variance profile,
possibly sparse \cite{hachem2023approximate}, and non-symmetric matrix models
(i.i.d.~elements, elliptical models) could also be considered as well. Some of
these generalizations are currently under investigation. 

\subsection*{Outline of the article} The problem statement, the main results and simulations are presented in Section \ref{sec:main}. In Section \ref{subsec:statement-GOE} (resp. Section \ref{subsec:statement-Wishart}) Theorem \ref{th:main-wigner} (resp. Theorem \ref{th:wishart-main}) describes the statistical properties of the equilibrium for an interaction matrix drawn from the GOE (resp. from the Wishart ensemble). In Section \ref{subsec:statement-universal}, we extend these results to matrix ensembles based on non-Gaussian entries. 
 Section \ref{sec:proof-GOE} is devoted to the proof of Theorem \ref{th:main-wigner}, starting with an outline of the proof in Section \ref{subsec:outline}, while elements of proof of Theorem \ref{th:wishart-main} are provided in Section \ref{sec:proof-wishart}. 

\subsection*{Main notations} For $x\in \RR$, let $x_+=\max(x,0)$, $x_-=\max(-x,0)$ and $[N]=\{ 1,\cdots, N\}$. Vectors will be denoted by lowercase bold letters $\bs{a}=(a_i),\bs{b}=(b_i)$, etc. If $f:\mathbb{R}\to \RR$ is a real function, vector $f(\bs{a})$ is defined pointwise by $f(\bs{a})= (f(a_i))_{i\in [N]}$. For vectors of same dimensions, $\bs{a}\odot \bs{b}= (a_ib_i)$ denotes the componentwise (Hadamard) product. 
Vector $\bs{1}_N$ is the $N\times 1$ vector of ones and $x\mapsto 1_{\mathcal S}(x)$ is the indicator function of set ${\mathcal S}$. Transpose of matrix $A$ is $A^\T$ and its eigenvalues are $\lambda_i(A)$.

For $\bs{a}=(a_i)$, $\bs{a}\succcurlyeq 0$ (resp. $\bs{a} \succ 0$) refers to the pointwise inequalities $a_i\ge 0$ (resp. $a_i>0$) for all $i\in [N]$. A positive (resp. negative) definite matrix $A$ is denoted by $A>0$ (resp. $A<0$).

Given a vector $\bs{a}$ and a matrix $A$, $\| \bs{a}\|$ denotes the Euclidian norm of $\bs{a}$ and $\|A\|$ the spectral norm of $A$. For a vector $\bs{a}$, 
$\| \bs{a}\|_0$ is the number of its non-zero elements and $\support(\bs{a})$ is its support, that is the set of indices of non-zero elements.

Given vectors $\bs{a}=(a_i)$, $\bs{a}^1=(a_i^1),\cdots, \bs{a}^k=(a_i^k)$ of 
the same size $N$, we denote as $\mu^{\bs{a}}$ and 
$\mu^{\bs{a}^1,\cdots, \bs{a}^k}$ the probability measures 
$$
\mu^{\bs{a}}=\frac 1N \sum_{i\in [N]} \delta_{a_i}\qquad \textrm{and}\qquad 
\mu^{\bs{a}^1,\cdots, \bs{a}^k} = \frac 1N \sum_{i\in [N]} \delta_{(a_i^1,\cdots, a_i^k)}\, .
$$
We call $\mu^{\bs{a}}$ the \emph{empirical distribution} of the  components of $\bs{a}$ and $\mu^{\bs{a}^1,\cdots, \bs{a}^k}$
the \emph{joint empirical distribution} of the components of $\bs{a}^1,\cdots, \bs{a}^k$.

If $\mu_N,\mu$ are probability measures over $\RR^d$ then $\mu_N\xrightarrow[N\to\infty]{w} \mu$ stands for the weak convergence of probability measures. The distribution of a random variable $X$ is denoted by ${\mathcal L}(X)$ and we express that two random variables $X,Y$ have the same distribution by $X\eqlaw Y$.
As usual, abbreviation a.s. stands for almost sure/surely.

\subsection*{Acknowledgment} These problems have been discussed at length with Matthieu Barbier, Maxime Clenet, François Massol and Chi Tran, whom we warmly thank. 

All authors are supported by the CNRS 80 prime project KARATE and GdR MEGA.
I.A. and M.M. are funded by Labex CEMPI (ANR-11-LABX-0007-01). 
W.H. and J.N. are supported by Labex B\'ezout (ANR-10-LABX-0058). 
 M.M. thanks for its hospitality the CRM Montr\'eal (IRL CNRS 3457) where part of this work was completed.\\

\section{Problem statement, assumptions, and main results}
\label{sec:main} 
\subsection{Equilibria, Wasserstein space and pseudo-Lipschitz functions} 

Independently of the struture of $\Sigma_N$, it is known that if $\| \Sigma_N
\| < 1$, then the ODE~\eqref{eq:LV-system} admits a unique solution
$(\bs{x}_N(t), t\ge 0)$ with a bounded trajectory, for any arbitrary initial
value $\bs{x}_N(0) \succ 0$, see \cite{li-etal-09}.  Moreover the same
condition $\| \Sigma_N \| < 1$ guarantees, as we shall recall in more detail in
Section \ref{sec:proof-GOE}, the existence of a globally stable equilibrium
point $\bs{x}^\star_N$ in the classical sense of the Lyapounov theory
\cite[Chapter~3]{tak-livre96}.  

Given  $k \in \mathbb N^*$, the \emph{Wasserstein space} ${\mathcal P}_k(\RR^d)$
is defined as the set of probability measures $\mu$ over $\RR^d$ with finite
$k^{\text{th}}$ moment: $\int_{\RR^d} \| \bs{x}\|^k \mu(d\bs{x}) <\infty$.
Given $\mu,\nu\in {\mathcal P}_k(\RR^d)$, we denote by ${\mathcal M}_k(\mu, \nu)$
the set of probability measures in ${\mathcal P}_k(\RR^d\times \RR^d)$ with
marginals $\mu$ and $\nu$, i.e. 
$$
\eta\in {\mathcal M}_k(\mu, \nu) \qquad \Rightarrow \qquad \begin{cases} \eta(A\times \RR^d) &= \mu(A)\,,\\
\eta(\RR^d\times B) &= \nu(B)\,,
\end{cases}
$$
for all $A,B$ Borel sets in $\mathbb{R}^d$. We can endow the space 
${\mathcal P}_k(\RR^d)$ with the distance:
$$
d_k(\mu,\nu) = \inf_{\eta\in {\mathcal M}_k(\mu,\nu)}
\left\{ \int_{\RR^d\times \RR^d} \| \bs{x} - \bs{y}\|^k \eta(d\bs{x}d\bs{y})
 \right\}^{1/k}\, .
$$
A function $\varphi:\RR^d\to \RR$ is \emph{pseudo-Lipschitz} with constant $L$ 
and degree $k\geq 2$ if for all $\bs{x},\bs{y}\in \RR^d$, the following 
inequality holds:
$$
|\varphi(\bs{x}) - \varphi(\bs{y}) | \quad \le \quad L \| \bs{x}- \bs{y}\| 
 \left( 1+\|\bs{x}\|^{k-1} +\|\bs{y}\|^{k-1}\right)\, .
$$
We denote by $PL_k(\RR^d)$ this set of functions. We will rely on the following classical lemma in the sequel, see for instance \cite[Section 1.1 and 7.4]{fen-etal-(now)22} and \cite{vil-livre09}.
\begin{lemma}\label{lemma:conv-P2}
Let $\mu_N,\mu\in {\mathcal P}_k(\RR^d)$ for $k\geq 2$. The following 
conditions are equivalent:
\begin{enumerate}[label=(\roman*)]
    \item $d_k(\mu_N,\mu) \xrightarrow[N\to \infty]{} 0$,
    \item For all $\varphi \in PL_k(\RR^d)$, 
  $\int \varphi d\mu_N \xrightarrow[N\to\infty]{} \int \varphi d\mu$,
    \item $\mu_N\xrightarrow[N\to\infty]{w} \mu$ and 
   $\int_{\RR^d} \| \bs{x}\|^k \mu_N(d\bs{x}) \xrightarrow[N\to\infty]{} 
    \int_{\RR^d} \| \bs{x}\|^k \mu(d\bs{x})$.
\end{enumerate}
\end{lemma}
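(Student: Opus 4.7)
The plan is to treat the classical equivalence (i) $\Leftrightarrow$ (iii) as a black box from optimal transport theory, and then to squeeze (ii) in between by exhibiting enough test functions in $PL_k(\RR^d)$ to recover weak convergence and convergence of the $k$-th moment, while using the growth control intrinsic to $PL_k$ to get (ii) back from (iii).

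First, I would invoke the standard theorem in optimal transport (see e.g. Villani, \emph{Optimal Transport, Old and New}, Theorem 6.9) which states precisely that convergence in $d_k$ on $\mathcal{P}_k(\RR^d)$ is equivalent to the conjunction of weak convergence and convergence of the $k$-th moment. This directly gives (i) $\Leftrightarrow$ (iii) without any extra work.

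Next I would prove (ii) $\Rightarrow$ (iii). For the weak convergence part, observe that any bounded Lipschitz function $\psi$ lies in $PL_k(\RR^d)$: indeed $|\psi(\bs{x})-\psi(\bs{y})| \le \min(2\|\psi\|_\infty, L_\psi\|\bs{x}-\bs{y}\|) \le L\|\bs{x}-\bs{y}\|(1+\|\bs{x}\|^{k-1}+\|\bs{y}\|^{k-1})$ for $L$ large enough, so (ii) applied to bounded Lipschitz test functions yields $\mu_N \to \mu$ weakly via the Portmanteau theorem. For the moment convergence, the key observation is that $\varphi(\bs{x}) = \|\bs{x}\|^k$ belongs to $PL_k(\RR^d)$; this follows from the elementary inequality
\[
\bigl| \|\bs{x}\|^k - \|\bs{y}\|^k \bigr| \le k\, \|\bs{x}-\bs{y}\| \bigl(\|\bs{x}\|^{k-1}+\|\bs{y}\|^{k-1}\bigr),
\]
obtained by a one-variable mean-value argument applied to $t \mapsto \|(1-t)\bs{y} + t\bs{x}\|^k$ combined with the triangle inequality. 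Applying (ii) to this $\varphi$ gives the desired convergence of $k$-th moments.

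The last and most substantial step is (iii) $\Rightarrow$ (ii). Given $\varphi \in PL_k(\RR^d)$, the pseudo-Lipschitz bound used with $\bs{y}=0$ first yields a growth estimate of the form $|\varphi(\bs{x})| \le C(1+\|\bs{x}\|^k)$ for some constant $C$. I would then use a truncation argument: fix a continuous cutoff $\chi_R : \RR^d \to [0,1]$ equal to $1$ on $\{\|\bs{x}\|\le R\}$ and to $0$ on $\{\|\bs{x}\|\ge R+1\}$, and split
\[
\int \varphi\, d\mu_N \;=\; \int \varphi\,\chi_R\, d\mu_N \;+\; \int \varphi\,(1-\chi_R)\, d\mu_N,
\]
and similarly for $\mu$. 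The first integral involves a bounded continuous function, hence converges under the weak convergence part of (iii). The remainder is controlled, for both $\mu_N$ and $\mu$, by $C\int_{\|\bs{x}\|\ge R} (1+\|\bs{x}\|^k)\, d\mu_N(\bs{x})$ and the analogous quantity for $\mu$. Here the main obstacle, and the point where one must be careful, is precisely the uniform integrability of $\|\bs{x}\|^k$ under $(\mu_N)$: the assumption that $\int \|\bs{x}\|^k d\mu_N \to \int \|\bs{x}\|^k d\mu$ combined with weak convergence yields exactly this uniform integrability (this is a standard consequence, sometimes stated as Vitali's convergence theorem in this context). This lets me make the remainder arbitrarily small uniformly in $N$ by choosing $R$ large, and then take $N \to \infty$, closing the chain of implications.
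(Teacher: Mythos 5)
The paper does not actually prove this lemma: it is stated as a classical fact and the reader is pointed to \cite[Sections 1.1 and 7.4]{fen-etal-(now)22} and to Villani \cite{vil-livre09}. So there is no ``paper's proof'' to compare against; you have supplied a self-contained argument where the authors rely on a citation.

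That said, your proof is correct and follows the standard route one finds in those references. The three steps are all sound: invoking Villani's Theorem~6.9 for the equivalence (i)~$\Leftrightarrow$~(iii) is exactly the right black box; the verification that bounded Lipschitz functions and $\bs{x} \mapsto \|\bs{x}\|^k$ both belong to $\PL_k(\RR^d)$ is correct (the case split on $\|\bs{x}-\bs{y}\| \le 1$ versus $>1$ for the first, and the factorization $a^k - b^k = (a-b)\sum_{j=0}^{k-1} a^j b^{k-1-j}$ together with the reverse triangle inequality for the second), which gives (ii)~$\Rightarrow$~(iii); and the truncation argument for (iii)~$\Rightarrow$~(ii), resting on the fact that weak convergence plus convergence of $k$-th moments yields uniform integrability of $\|\bs{x}\|^k$ under $(\mu_N)$, is the standard closing step. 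One small remark: Villani's Theorem~6.9 actually already gives convergence of $\int \varphi\, d\mu_N$ for \emph{all} continuous $\varphi$ with $|\varphi(\bs{x})| \le C(1+\|\bs{x}\|^k)$, a class strictly larger than $\PL_k(\RR^d)$, so (iii)~$\Rightarrow$~(ii) could also be read off directly from that theorem; your truncation argument is a clean way to make this self-contained, and the implication (ii)~$\Rightarrow$~(iii) is the part that genuinely requires exhibiting the specific test functions you chose.
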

If one of the equivalent conditions of Lemma \ref{lemma:conv-P2} is satisfied, we say that the sequence $(\mu_N)$ converges in 
${\mathcal P}_k(\RR^d)$ to $\mu$ and denote it by
$$
\mu_N \xrightarrow[N\to\infty]{{\mathcal P}_k(\RR^d)} \mu\, .
$$
If not misleading, we will occasionally drop $\RR^d$ and simply write 
${\mathcal P}_k, PL_k$.

Let $\bs{r}_N$ be a random vector of dimension $N\times 1$ that
satisfies the following assumption. 
\begin{assumption}
\label{ass:r}
The following hold true. 
\begin{enumerate}[label=(\roman*)]
    \item For all $N\ge 1$, $\bs{r}_N\succcurlyeq 0$ is defined on the same probability space as matrix $\Sigma_N$ and is independent of $\Sigma_N$.
    \item There exists a probability measure $\bar\mu \in \cP_2(\RR^+)$ such that  
$\bar\mu\neq \bs\delta_0$ and 
\[
(a.s.) \qquad \mu^{\bs{r}_N} \xrightarrow[N\to\infty]{\cP_2(\RR)} \bar\mu \ .
\]
\end{enumerate}

\end{assumption} 

\subsection{The GOE case} \label{subsec:statement-GOE} 
We first define rigorously the symmetric 
interaction matrix $\Sigma_N$ and express sufficient conditions for the existence of a unique global equilibrium $\bs{x}^\star_N$ to \eqref{eq:LV-system}.
\begin{assumption}\label{ass:goe}
    Let $A_N$ be a $N\times N$ matrix from the Gaussian Orthogonal Ensemble. Namely, considering that $X_N$ is a real $N\times
N$ matrix with independent $\cN(0,1)$ elements, 
$$A_N \eqlaw \frac{X_N + X_N^\T}{\sqrt{2}}\, .$$
Let $\kappa$ be a positive real number. Then, 
\begin{equation}\label{eq:GOE-matrix}
\Sigma_N = \frac{A_N}{\kappa \sqrt{N}}\, .
\end{equation}
\end{assumption}
\begin{remark} Denote by $A^{(N)}_{ij}$ the element $(i,j)$ of $A_N$, then
$A^{(N)}_{ij}=A^{(N)}_{ji}$ and ${\mathcal L} (A^{(N)}_{ij}) = \cN(0, 1+\delta_{ij})$ where $\delta_{ij}$ is the Kronecker symbol with value 1 if $i=j$, zero else. Much is known about this model, in particular the asymptotic behaviour of the spectral measure of $A_N/\sqrt{N}$ (Wigner's theorem) and its spectral norm, see for instance \cite{bai-sil-book,pas-livre} and the references therein:
\begin{equation}\label{eq:wigner-reminder}
(a.s.) \qquad \frac 1N \sum_{i\in [N]} \delta_{\lambda_i\left(A_N/\sqrt{N}\right)} \ \xrightarrow[N\to\infty]{w}\  \frac{\sqrt{(4-x^2)_+}}{2\pi}\, dx\qquad \textrm{and}\qquad \left\|\frac {A_N}{\sqrt{N}}\right\| \ \xrightarrow[N\to\infty]{}\  2\, .  
\end{equation}
\end{remark}

We shall consider the following assumption:  
\begin{assumption}
\label{ass:kappa}
The normalizing factor $\kappa$ in \eqref{eq:GOE-matrix} satisfies $\kappa > 2$.  
\end{assumption} 
Combining Assumption~\ref{ass:kappa} and the a.s. convergence of $\|A_N/\sqrt{N}\|$ toward 2, we get that with probability one, eventually
$$
\| \Sigma_N\|<1\, .
$$
Formally, this property means that there exists a set $\widetilde \Omega$ with probability one such that 
$$
\forall \omega\in \widetilde \Omega\,,\quad \exists N^\star(\omega)\,,\quad  \forall N \ge N^\star(\omega)\,, \qquad \| \Sigma_N\| <1\, .  
$$
As a consequence, for every $\omega\in \widetilde \Omega$, the existence and uniqueness of $\bs{x}^\star_N$ is granted for $N$ large enough.
We can now describe the behaviour of the empirical distribution $\mu^{\bs{x}_N^\star}$ as $N\to \infty$
and state the main result of this section.
\begin{theorem}
\label{th:main-wigner} 
\begin{enumerate}[label=(\roman*)]
\item Let $\bar r \geq 0$ be
a real valued random variable with finite second moment and ${\mathcal L}(\bar{r})\neq \delta_0$. 
Let $\oZ$ be a 
$\cN(0, 1)$ random variable independent of $\bar r$. Then, for any $\kappa>\sqrt{2},$ the system of 
equations 
\begin{subequations}
\label{sys} 
\begin{align}
\kappa &= \delta + \frac{\gamma} \delta, \label{sys-kappa} \\ 
\sigma^2 &= \frac{1}{\delta^2} 
  \EE
   \left( \sigma \oZ + \bar r \right)_+^2, \label{sys-omega} \\ 
\gamma &= \mathbb{P} \Bigl[ \sigma \oZ + \bar r > 0 \Bigr]  , 
 \label{sys-gamma} 
\end{align}
\end{subequations} 
admits an unique solution $(\delta,\sigma,\gamma)$ in
$(1/\sqrt{2},\infty) \times (0,\infty) \times (0,1)$. 
\item Let $\bs{r}_N\succcurlyeq 0$ and let Assumptions \ref{ass:goe} and \ref{ass:kappa} hold true. Then, $\| \Sigma_N  \| < 1$ eventually with 
probability one. For such $N$'s, the 
ODE~\eqref{eq:LV-system} is defined for all $t\in \RR_+$ and has a globally stable 
equilibrium $\bs{x}^\star_N$. For the other $N$'s, let $\bs{x}^\star_N = 0$. 

\item Let Assumptions~\ref{ass:r}, \ref{ass:goe} and~\ref{ass:kappa} hold. Define $\bs{x}^\star_N$ as previously. The distribution $\mu^{\bs{x}^\star_N}$ is a $\cP_2(\RR)$--valued random variable 
on the probability space where $A_N$ and $\bs{r}_N$ are defined. Assume that $\bar{r}$ is a r.v. with ${\mathcal L}(\bar{r}) = \bar{\mu}$, independent of $\bar Z\sim {\mathcal N}(0,1)$. Then, the convergence 
\begin{equation}
\label{cvg-muN} 
(a.s.)\qquad \mu^{\bs{x}^\star_N} \ \xrightarrow[N\to\infty]{\cP_2(\RR)} \
 \mcL\left( \left( 1 + \gamma/\delta^2 \right) 
  \left( \sigma \oZ + \bar r \right)_+ \right) 
\end{equation} 
holds true, where $\delta, \sigma, \gamma$ are defined as solutions of system \eqref{sys}.
\end{enumerate}
\end{theorem}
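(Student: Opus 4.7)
The plan is to use the AMP framework announced in the introduction. I split the argument following the three parts of the statement. Part (ii) is short: Assumption~\ref{ass:kappa} together with $\|A_N/\sqrt N\|\to 2$ almost surely yields $\|\Sigma_N\|<1$ eventually, and existence/uniqueness of a globally stable equilibrium then follows from the classical Lyapunov argument for LV systems recalled in \cite{tak-livre96,li-etal-09}. For part (i), I would first solve $\kappa=\delta+\gamma/\delta$ for the branch $\delta(\gamma)=(\kappa+\sqrt{\kappa^2-4\gamma})/2$, the unique root satisfying $\delta>1/\sqrt 2$ when $\kappa>\sqrt 2$; substituting $\gamma=\gamma(\sigma):=\PP[\sigma\oZ+\bar r>0]$ into \eqref{sys-omega} then reduces the system to a scalar equation in $\sigma$, and existence and uniqueness follow from monotonicity -- $\sigma\mapsto\sigma^{-2}\EE[(\sigma\oZ+\bar r)_+^2]=\EE[(\oZ+\bar r/\sigma)_+^2]$ is strictly decreasing from $+\infty$ (because $\bar\mu\ne\bs\delta_0$) to $1/2$, while $\sigma\mapsto\delta(\gamma(\sigma))^2$ is strictly increasing -- together with the intermediate value theorem.

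The heart of the argument is part (iii). Its starting point is the equivalent fixed-point formulation of the LCP \eqref{eq:LCP-first-sight},
\begin{equation*}
\bs{x}^\star_N=(\bs{r}_N+\Sigma_N\bs{x}^\star_N)_+,
\end{equation*}
which is a contraction in $\ell^2$ because $\|\Sigma_N\|<1$. With $(\delta,\sigma,\gamma)$ given by part (i), I introduce the rescaled Wigner matrix $W_N:=A_N/\sqrt N=\kappa\Sigma_N$ and the denoiser $\eta(v,r):=\delta^{-1}(\kappa r+v)_+$, and consider the symmetric AMP recursion
\begin{equation*}
\bs{v}^{k+1}=W_N\bs{x}^k-b_k\bs{x}^{k-1},\qquad \bs{x}^k=\eta(\bs{v}^k,\bs{r}_N),\qquad b_k=\frac{1}{N}\sum_{i=1}^N\partial_v\eta(v^k_i,r_i),
\end{equation*}
initialized by $\bs{x}^{-1}=\bs{x}^0=0$, so that $\bs{x}^1=(\kappa/\delta)\bs{r}_N$. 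The general symmetric-matrix AMP state-evolution theorem for pseudo-Lipschitz test functions (in the form convenient for our setting, see \cite{fen-etal-(now)22}) then gives, at each fixed $k$, the almost sure $\cP_2(\RR^2)$-convergence $\mu^{\bs{v}^k,\bs{r}_N}\to\mcL(\tau_k\oZ,\bar r)$ with $\oZ\sim\cN(0,1)$ independent of $\bar r$, together with the density evolution $\tau_{k+1}^2=\EE[\eta(\tau_k\oZ,\bar r)^2]$. A direct computation using \eqref{sys-omega}--\eqref{sys-gamma} shows that this recursion has unique fixed point $\tau_\infty=\kappa\sigma$, that $b_k\to\gamma/\delta$, and that the law of $\eta(\tau_\infty\oZ,\bar r)$ is precisely $\mcL((\kappa/\delta)(\sigma\oZ+\bar r)_+)$, matching the target $(1+\gamma/\delta^2)(\sigma\oZ+\bar r)_+$ via $\kappa=\delta+\gamma/\delta$.

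The final step is to transfer this SE limit from the AMP iterates to $\bs{x}^\star_N$ itself. A short algebraic check using $\kappa\delta=\delta^2+\gamma$ verifies that $\bs{x}=\bs{x}^\star_N$ satisfies the formal AMP fixed-point identity $\delta\bs{x}=(\kappa(\bs{r}_N+\Sigma_N\bs{x})-(\gamma/\delta)\bs{x})_+$, obtained by setting $\bs{v}^\infty=W_N\bs{x}-(\gamma/\delta)\bs{x}$ and $\bs{x}=\eta(\bs{v}^\infty,\bs{r}_N)$, so the SE fixed point is consistent with the LCP. To convert this formal consistency into a convergence statement I would prove that $(\bs{x}^k)$ is Cauchy in the normalized norm $N^{-1/2}\|\cdot\|$, almost surely with error uniform in $N$, by propagating local contractivity of the scalar SE map near $\tau_\infty$ through the pseudo-Lipschitz AMP estimates and using $\|\Sigma_N\|<1$ to dominate the residual terms. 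Coupling this with the SE convergence yields \eqref{cvg-muN}.

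The main obstacle is precisely this exchange of the $N\to\infty$ and $k\to\infty$ limits. At the scalar level the SE map is indeed locally contractive near $\tau_\infty$: Gaussian integration by parts gives for its derivative the value $\gamma/\delta^2=\kappa/\delta-1$, which lies in $(0,1)$ because $\gamma<\kappa^2/4$ (automatic from part (i)); the mild non-smoothness of the ReLU denoiser is absorbed by the fact that $\PP[\sigma\oZ+\bar r=0]=0$ at the fixed point. Propagating this contraction to the random iterates $\bs{x}^k$ uniformly in both $k$ and $N$, however, is the delicate part, and is where most of the technical work is expected to lie.
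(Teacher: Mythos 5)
Parts (i) and (ii) are fine; for (i) you parametrize by $\gamma$ rather than by $\delta$ (the paper fixes $\delta$, solves for $\sigma^2(\delta)$ via a fixed-point lemma, and then shows $\delta\mapsto\delta+\gamma(\delta)/\delta$ is strictly increasing), but your monotonicity argument reaches the same conclusion and either route works.

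For part (iii), your AMP setup is the same as the paper's up to a harmless change of variables (your denoiser $\eta(v,r)=\delta^{-1}(\kappa r+v)_+$ vs.\ the paper's $h(u,a)=\delta^{-1}(u+a)_+$ with $\bs a_N=(\kappa/\delta)\bs r_N$; both give the same SE recursion and the same fixed point $\sigma$). The genuine gap is in your final transfer step. You propose to show the random iterates $(\bs x^k)$ are Cauchy in $N^{-1/2}\|\cdot\|$ \emph{uniformly in $N$}, by ``propagating local contractivity of the scalar SE map.'' This is not something that state evolution delivers: SE is a fixed-$k$ result giving the a.s.\ limit as $N\to\infty$ of $N^{-1}\|\bs x^{k+1}-\bs x^{k}\|^2$ for each $k$, and the resulting quantity must then be sent to zero along $k\to\infty$ in an \emph{iterated} limit $\lim_{k\to\infty}\,\text{a.s.}\lim_{N\to\infty}$. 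A uniform-in-$N$ Cauchy bound is a much stronger statement; the AMP recursion (with its Onsager memory term $-b_k\bs x^{k-1}$) is not a contraction in any pathwise sense, and local contractivity of the one-dimensional SE map $\tau\mapsto$ (SE update) does not propagate to the random vectors. Moreover, even granted some form of iterate convergence, you still need a mechanism to convert closeness of $(\bs\xi^k_N)_+$ to $\bs x^\star_N$ into closeness of empirical distributions.

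The paper closes both gaps differently, and the two ideas you are missing are concrete and worth internalizing. First, rather than trying to show the iterates converge, the paper massages the AMP recursion and the first SE equation $\kappa=\delta+\gamma/\delta$ to show that $(\bs\xi^k_N)_+$ is the \emph{exact} solution of a perturbed LCP, $\LCP(I_N-\Sigma_N,\,-\bs r_N-\tfrac{\delta}{\kappa}\bs\varepsilon^k_N)$, with an explicit residual $\bs\varepsilon^k_N$ built from $\gamma-\gamma^k_N$, $\bs\xi^k_N-\bs\xi^{k+1}_N$, and $(\bs\xi^k_N)_+-(\bs\xi^{k-1}_N)_+$ (see \eqref{eq:err} and \eqref{eq:LCP-almost-bis}). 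Second, the control of this residual does \emph{not} come from contractivity of the SE map but from the Montanari--Richard alignment argument: one shows the SE correlation coefficient $Q_k=\EE Z^{k-1}Z^k/(\theta_{k-1}\theta_k)$ converges to $1$ via a monotone map $\mathcal H$ (Lemma~\ref{lemma:aligned}), which then gives $\lim_k\,\text{a.s.}\lim_N N^{-1}\|\bs\varepsilon^k_N\|^2=0$. Finally, the Chen--Xiang LCP perturbation bound $\|\bs x^\star_N-(\bs\xi^k_N)_+\|\le\|(I_N-\Sigma_N)^{-1}\|\cdot\tfrac{\kappa}{\delta}\|\bs\varepsilon^k_N\|$ provides the quantitative bridge from the perturbed LCP to the genuine one, after which the pseudo-Lipschitz test-function bookkeeping finishes the job. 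Replacing your ``uniform Cauchy'' claim with the alignment lemma plus the LCP perturbation bound is exactly what makes the argument rigorous.
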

This theorem, which proof is postponed to Section \ref{sec:proof-GOE}, calls for some remarks.  
\begin{remark}
Equations \eqref{sys-kappa}-\eqref{sys-gamma} have already been obtained\footnote{Notice that in \cite{bun-17,gal-18}, the authors consider more general models such as the elliptical model, which in particular captures the Wigner model.} at a physical level of rigor by Bunin \cite{bun-17} and Galla \cite{gal-18}. Up to our knowledge, Theorem \ref{th:main-wigner} is the first rigorous statement to describe the asymptotic properties of $\bs{x}^\star_N$. 
\end{remark}

\begin{remark}
Notice that system \eqref{sys} admits an unique solution for $\kappa >\sqrt{2}$
while Convergence \eqref{cvg-muN} is only established for $\kappa>2$.
\end{remark}

\begin{remark}[behavior of surviving species proportion] 
Theorem \ref{th:main-wigner} sheds some light on the proportion of surviving species at equilibrium : inspecting~\eqref{sys-gamma} 
and~\eqref{cvg-muN}, the parameter 
$\gamma$ can be interpreted as the limiting proportion of surviving
species $\| x^\star_N \|_0 / N.$  Simulations in Fig. \ref{subfig:prop-GOE} confirm this fact.

One can see from Equation~\eqref{sys-gamma} 
 that $\gamma > 1/2,$ which means that in this model, more than half the species survive. 

Furthermore, an easy calculation involving Equations~\eqref{sys-omega}
and~\eqref{sys-gamma} shows that $\gamma$ does not change if we replace $\bar
r$ with $K \bar r$ where $K > 0$ is an arbitrary constant.

Nevertheless, on a rigorous level, one can only deduce from Theorem \ref{th:main-wigner} that
\[
\sup_\varphi \left\{  (a.s) \lim_{N\to\infty}
\frac 1N \sum_{i\in [N]} \varphi(x^\star_{i}) \right\}= \gamma,  
\]
where $\sup_\varphi$ is taken on the set of functions 
$\{ \varphi : \RR \to [0,1] \ \text{continuous}, \ \varphi(0) = 0 \}$. 

Since the function $1_{\{x > 0\}}$ is not continuous, the 
convergence~\eqref{cvg-muN} does not imply that $\| \bs{x}^\star_N \|_0 / N$ 
converges to $\gamma$, for any type of convergence.  
Up to our knowledge, the study of the asymptotic behavior of $\| \bs{x}^\star_N \|_0 / N$ is an open question.  
\end{remark}

 \begin{figure}[h!]
  \centering
    \begin{subfigure}[b]{.46\linewidth}
        \includegraphics[scale=0.35]{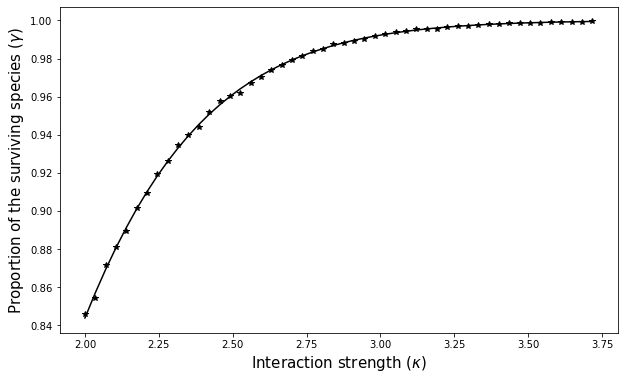}
        \caption{Proportion of surviving species (GOE)}
        \label{subfig:prop-GOE}
    \end{subfigure}%
    \hspace*{\fill} 
    \begin{subfigure}[b]{.46\linewidth}
        \includegraphics[scale=0.515]{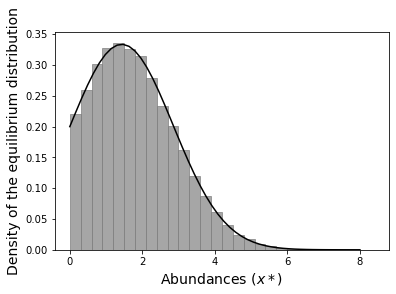}
        \caption{Density of a surviving species}
        \label{subfig:hist-GOE}
    \end{subfigure}
    
 \caption{Subplot \ref{subfig:prop-GOE} represents the proportion of surviving species, that is the proportion of positive components of the equilibrium $\bs{x}^\star$ (star), versus the theoretical value of $\gamma$ (solid line), given the parameter $\kappa$ which varies from 2 to 3.75. In the plot, $N=1000$ and each point (star) is the mean of proportions obtained out of 100 Monte-Carlo simulations. Subplot \ref{subfig:hist-GOE} represents the distribution of a surviving species ($N=1000$ and 100 Monte-Carlo simulations). The solid line represents the theoretical value of the density $f_{Z\mid Z>0}$ where $Z$ is the random variable with limiting distribution of $\mu^{\bs{x}_N^\star}$ given in \eqref{cvg-muN} - cf. Theorem \ref{th:main-wigner}.}
\label{fig:GOE}
\end{figure}

\subsection{The Wishart case} 
\label{subsec:statement-Wishart}
Wishart matrices are interesting in theoretical ecology to model interactions between two species which depend on the distance between values of some given functional traits, see for instance \cite[\S~4.6]{akj-etal-(arxiv)22} or \cite{rozas2023competitive}.  

    \begin{assumption}\label{ass:Wishart-matrix}
        Let $B_N$ be a $P\times N$ matrix with i.i.d. Gaussian ${\mathcal N}(0,1)$ entries. Let $\kappa$ be a real positive number and define the $N\times N$ matrix $\Sigma_N$ as:
        \begin{equation}\label{eq:Wishart-matrix}
            \Sigma_N = \frac{B_N^\T B_N}{\kappa P}\, .
        \end{equation}
    \end{assumption}
    For this model, the $i$th column of matrix $B_N$ is a vector modelling the traits of species $i$. 

    We will be interested in the specific regime where $N,P$ go to infinity at the same pace:
    \begin{assumption}\label{ass:MP}
        Let $N=N(P)$  and assume that $$
        \frac NP \xrightarrow[P\to\infty]{} c\in (0,\infty)\, .
        $$ 
        This regime will be denoted by $N,P\to \infty$ in the sequel.
    \end{assumption}

Model \eqref{eq:Wishart-matrix} has been thoroughly studied under Assumption \ref{ass:MP}. Marchenko-Pastur's theorem describes the asymptotic behaviour of the spectral limit of $B_N^\T B_N/P$. The limiting spectral norm has been studied by Bai and Yin, see for instance \cite{bai-sil-book,pas-livre} and the references therein:
$$
(a.s.) \qquad \left\|\frac {B_N^\T B_N}{P}\right\| \ \xrightarrow[N,P\to\infty]{}\  (1+\sqrt{c})^2\, .  
$$

	\begin{assumption}
		\label{ass:kappa-wishart} The normalizing factor in \eqref{eq:Wishart-matrix} satisfies $\kappa>(1+\sqrt{c})^2$.  
	\end{assumption} 
		
	We can now state the main result of this section.
	
	\begin{theorem}
		\label{th:wishart-main} 
        \begin{enumerate}[label=(\roman*)]
        \item Let $\bar r \geq 0$ be a real valued r.v. with ${\mathcal L}(\bar{r}) \neq \delta_0$. Let $\bar{Z}$ be a $\mathcal{N}(0, 1)$ r.v. independent of $\bar r$. Then, for every $\kappa>\left( 1+\sqrt{\frac c2}\right)^2$, the system of 
		equations 
		\begin{subequations}
			\label{eq:sys-wishart} 
			\begin{align}
			\kappa &= \left(\delta + c\gamma\right)\left(1+\frac{1}{\delta}\right), \label{sys-wishart-kappa} \\ 
			\tau^2 &= \frac{c}{\delta^2} 
			\mathbb{E} \left[
			\left( \tau \bar{Z} + \bar r \right)_+^2\right], \label{sys-wishart-omega} \\ 
			\gamma &= \mathbb{P} \Bigl[ \tau \bar{Z} + \bar r > 0 \Bigr]  , 
			\label{sys-wishart-gamma} 
			\end{align}
		\end{subequations} 
		admits an unique solution $(\delta,\tau,\gamma)$ in 
		$(\sqrt{c/2},\infty) \times (0,\infty) \times (0,1)$. 
		\item Let $\bs{r}_N\succcurlyeq 0$ and let Assumptions~\ref{ass:Wishart-matrix}, \ref{ass:MP} and \ref{ass:kappa-wishart} hold. 
		Then, $\| \Sigma_N \| < 1$ eventually with probability one. For such $N$'s, the LV ODE solution is defined for all $t\in \mathbb{R}_+$ and has a globally stable equilibrium $\boldsymbol{x}^\star_N$. For the other $N$, set $\boldsymbol{x}^\star_N = 0$. 
		
		\item Let Assumptions~\ref{ass:r}, \ref{ass:Wishart-matrix}, \ref{ass:MP} and \ref{ass:kappa-wishart} hold. Define $\bs{x}^\star_N$ as previously. The distribution $\mu^{\boldsymbol{x}^\star_N}$ is a $\mathcal{P}_2(\mathbb{R})$--valued random variable on the probability space where $A_N$ and $\bs{r}_N$ are defined. Assume that $\bar{r}$ is a r.v. with ${\mathcal L}(\bar{r})=\bar{\mu}$, independent of $\bar Z\sim {\mathcal N}(0,1)$. The following convergence holds true: 
		\begin{equation}
		\label{cvg-wishart-mu} 
		(a.s.)\qquad \mu^{\bs{x}^\star_N} \xrightarrow[N,P\to\infty]{\mathcal{P}_2(\mathbb{R})} 
		\mathcal L\left( \left( 1 + 1/\delta \right) 
		\left( \tau \bar{Z} + \bar r \right)_+ \right)\, ,
				\end{equation} 
    where $\delta$, $\tau$ and $\gamma$ are defined as solutions of system \eqref{eq:sys-wishart}.
    \end{enumerate}
	\end{theorem}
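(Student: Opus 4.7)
The proof would follow the same three-step structure as Theorem~\ref{th:main-wigner}, with the main novelty being the design of a rectangular AMP algorithm exploiting the factorization $\Sigma_N = B_N^\T B_N/(\kappa P)$ of the Wishart interaction matrix.

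For part~(i), I would reduce the system to a single equation in $\delta$. By convexity of $\tau \mapsto \EE[(\tau \bar Z + \bar r)_+^2]$ and standard implicit-function arguments, equation~\eqref{sys-wishart-omega} defines a unique positive root $\tau = \tau(\delta)$, strictly decreasing in $\delta$, and then $\gamma(\delta) := \PP[\tau(\delta)\bar Z + \bar r > 0]$ inherits monotonicity. Substituting into~\eqref{sys-wishart-kappa} yields $\kappa = (\delta + c\gamma(\delta))(1 + 1/\delta)$, and I would verify that the right-hand side is strictly monotone on $(\sqrt{c/2}, \infty)$ with infimum $(1+\sqrt{c/2})^2$ attained in the degenerate limit $\delta \downarrow \sqrt{c/2}$, $\gamma \to 1/2$ (ruled out by $\bar\mu \ne \delta_0$); this yields existence and uniqueness for every $\kappa > (1+\sqrt{c/2})^2$. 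Part~(ii) is immediate: Bai--Yin's theorem combined with Assumption~\ref{ass:kappa-wishart} gives $\|\Sigma_N\| < 1$ eventually a.s., on which event $I_N - \Sigma_N$ is symmetric positive definite, hence a P-matrix, and the LCP theory recalled in Section~\ref{subsec-lcp} yields existence, uniqueness, and global stability of $\bs{x}^\star_N$.

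Part~(iii) is the heart of the argument. The plan is to design a rectangular (bipartite) AMP algorithm alternating between multiplications by $B_N/\sqrt{\kappa P}$ and $B_N^\T/\sqrt{\kappa P}$, producing two coupled sequences of iterates $\bs{u}^k \in \RR^P$ and $\bs{v}^k \in \RR^N$. Using the LCP characterization $\bs{x}^\star_N = (\bs{r}_N + \Sigma_N \bs{x}^\star_N)_+$, a natural candidate iteration is
\begin{align*}
\bs{u}^{k+1} &= \tfrac{1}{\sqrt{\kappa P}} B_N \bs{x}^k - \alpha_k \bs{u}^k, \\
\bs{v}^{k+1} &= \tfrac{1}{\sqrt{\kappa P}} B_N^\T \bs{u}^{k+1} + \bs{r}_N - \beta_k \bs{x}^k, \\
\bs{x}^{k+1} &= (\bs{v}^{k+1})_+,
\end{align*}
where the Onsager coefficients $\alpha_k, \beta_k$ are prescribed by the standard AMP recipe in terms of averages of the derivatives of the non-linearities on each side. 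A state-evolution theorem for rectangular AMP (in the spirit of \cite{bay-mon-11,fen-etal-(now)22}) then implies that the joint empirical distribution of $(\bs{v}^k, \bs{r}_N)$ converges in $\cP_2(\RR^2)$ to $\mcL(\tau_k \bar Z + \bar r, \bar r)$, where $(\tau_k, \delta_k, \gamma_k)$ satisfies a deterministic DE recursion whose fixed point is precisely~\eqref{eq:sys-wishart}.

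The main obstacle, as in the GOE case, is transferring the asymptotic behaviour of the AMP iterates to the LCP solution $\bs{x}^\star_N$. I would first show that the DE recursion contracts to its unique fixed point $(\tau, \delta, \gamma)$ under the stronger hypothesis $\kappa > (1+\sqrt{c})^2$, exploiting that the effective Lipschitz constant can be bounded in terms of $(1+\sqrt{c})^2/\kappa < 1$ (whereas part~(i) only required the weaker threshold $(1+\sqrt{c/2})^2$). For fixed large $k$, the iterate $\bs{x}^k$ then approximately satisfies the LCP system~\eqref{eq:LCP-first-sight} with residuals of small empirical $L^2$ norm; the Lipschitz stability of the LCP solution map under the condition $\|\Sigma_N\| < 1$ (itself guaranteed by Bai--Yin) then yields $\|\bs{x}^k - \bs{x}^\star_N\|^2/N \to 0$ after letting first $N,P \to \infty$ and then $k \to \infty$. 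The $\cP_2$-convergence~\eqref{cvg-wishart-mu}, including the multiplicative factor $1 + 1/\delta$ that arises from the rescaling between $\bs{v}^k$ and the true LCP iterate, then follows from Lemma~\ref{lemma:conv-P2} combined with a pseudo-Lipschitz test function argument.
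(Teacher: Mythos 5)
Your proposal is correct and follows essentially the same approach as the paper: a rectangular (asymmetric) AMP algorithm built from $B_N$ and $B_N^\T$, a state-evolution / Density Evolution analysis yielding the fixed-point system \eqref{eq:sys-wishart}, the alignment argument (as in the GOE case, via the Montanari--Richard lemma) to control the error vector, and the Chen--Xiang LCP perturbation bound combined with Proposition~\ref{prop:fixed-point-LCP} to transfer to $\bs{x}^\star_N$. One small misattribution worth flagging: you present the stronger threshold $\kappa > (1+\sqrt{c})^2$ as what makes the DE recursion contract (``effective Lipschitz constant bounded by $(1+\sqrt{c})^2/\kappa < 1$''), but that is not what the threshold is for. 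The DE fixed-point iteration converges, by the monotonicity argument of Lemma~\ref{lemma:point-fixe}(iii) adapted to the Wishart case, already for the weaker condition $\delta>\sqrt{c/2}$ (i.e.\ $\kappa>(1+\sqrt{c/2})^2$); the stronger hypothesis is instead what guarantees, via Bai--Yin, that $\|\Sigma_N\|<1$ eventually a.s., which is needed both for well-posedness of the LCP and for the perturbation bound $\|(I_N-\Sigma_N)^{-1}\|$ to stay bounded. Also, a cosmetic point: you absorb $\kappa$ into the matrix normalization ($B_N/\sqrt{\kappa P}$), which takes the measurement matrix out of the standardized form needed by the state-evolution theorems; the paper instead keeps $B_N/\sqrt{P}$ and pushes the scaling $1/\delta$ inside the non-linearity, which is the cleaner way to fit \cite[Theorem 2.5]{fen-etal-(now)22}.
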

     There is a strong matching between the parameters obtained by solving system \eqref{eq:sys-wishart} and their empirical counterparts obtained by Monte-Carlo simulations, as illustrated in Fig. \ref{fig:wishart}.

 \begin{figure}[h!]
  \centering
    \begin{subfigure}{.46\linewidth}
        \includegraphics[scale=0.35]{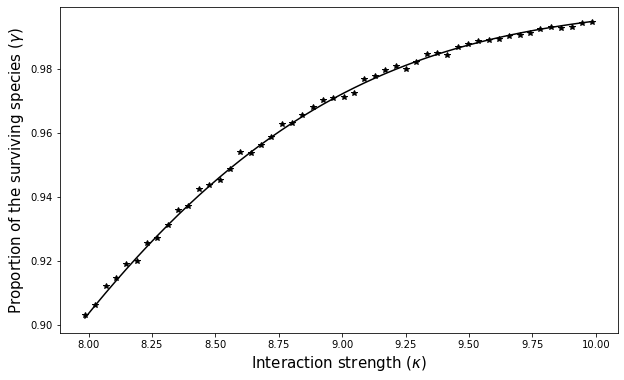}
        \caption{Proportion of surviving species (GOE)}
        \label{subfig:prop-Wishart}
    \end{subfigure}%
    \hspace*{\fill} 
    \begin{subfigure}{.46\linewidth}
        \includegraphics[scale=0.515]{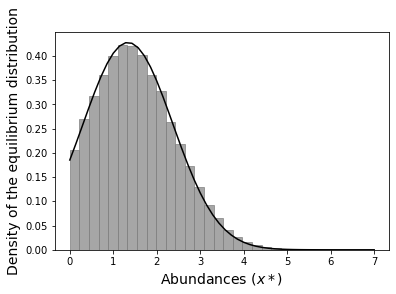}
        \caption{Density of a surviving species}
        \label{subfig:hist-Wishart}
    \end{subfigure}
    
 \caption{Subplot \ref{subfig:prop-Wishart} represents the proportion of surviving species, that is the proportion of positive components of the equilibrium $\bs{x}^\star$ (star), versus the theoretical value of $\gamma$ (solid line), given the parameter $\kappa$ which varies from 2 to 3.75. In the plot, $N=1000$, $P=300$ and each point (star) is the mean of proportions obtained out of 100 Monte-Carlo simulations. Subplot \ref{subfig:hist-Wishart} represents the distribution of a surviving species ($N=1000$, $P=300$ and 100 Monte-Carlo simulations). The solid line represents the theoretical value of the density $f_{Z\mid Z>0}$ where $Z$ is the random variable with limiting distribution of $\mu^{\bs{x}_N^\star}$ given in \eqref{cvg-wishart-mu} - cf. Theorem \ref{th:wishart-main}.}
\label{fig:wishart}
\end{figure}

 
	The proof of this theorem relies on an asymmetric version of the AMP 
	algorithm and is otherwise very close to the proof of Theorem \ref{th:main-wigner}. We provide some details in Section \ref{sec:proof-wishart}.  
	
\subsection{Towards universality}
\label{subsec:statement-universal}

We mentionned in the introduction that AMP techniques have been generalized to matrices with non-necessarily Gaussian entries, see \cite{bay-lel-mon-15,che-lam-21,dud-lu-sen-(arxiv)22,wan-zho-fan-(arxiv)22}. It is possible, at low cost, to relax the Gaussiannity assumption of the entries in Assumptions \ref{ass:goe} and \ref{ass:MP}. 

We first strenghten Assumption \ref{ass:r} and replace it by the following stronger assumption: 

\begin{assumption}
    \label{ass:r-strong}
The following holds true: 
\begin{enumerate}[label=(\roman*)]
    \item For all $N\ge 1$, $\bs{r}_N\succcurlyeq 0$ is defined on the same space as 
  matrix $\Sigma_N$ and is independent of $\Sigma_N$.
  \item There exists a probability measure 
  $\bar\mu \in \cP(\RR^+)$ such that $\bar\mu\neq \bs\delta_0$, the moment
generating function of $\bar\mu$ is analytical near zero (which implies 
that $\bar\mu$ has all its moments finite), and 
\[
(a.s.) \qquad \mu^{\bs{r}_N} \xrightarrow[N\to\infty]{\cP_k(\RR)} \bar\mu \qquad \textrm{for all}\quad k\ge 1\,.
\]
\end{enumerate}
\end{assumption} 
We now relax the GOE assumption (Assumption \ref{ass:goe}).
\begin{assumption}
\label{ass:univ-goe} 
Let $A_N= \left( A_{ij}^{(N)}\right)$ be a $N\times N$ symmetric matrix where
the $A^{(N)}_{ij}$'s are centered independent random variables satisfying 
$$
\EE (A^{(N)}_{ij})^2 = 1\quad (i<j)\,,\quad \sup_N \max_i \EE (A^{(N)}_{ii})^2 < C\,,
$$
and 
$$
\max_{i,j} 
  N^{1-k/2} \EE \left| A^{(N)}_{ij} \right|^k \xrightarrow[N\to\infty]{} 0\quad (k\ge 3)\,.
$$
Moreover, the following holds true:
\begin{equation}
\label{cvg-normsym} 
\| A_N \| \toaslong 2 .
\end{equation} 
Denote by $\Sigma_N = A_N / (\kappa\sqrt{N})$.
\end{assumption} 

\begin{remark}[Wigner matrices]\label{rem:wigner}
The standard example of a matrix $A_N$ that generalizes the GOE model and that
complies with Assumption~\ref{ass:univ-goe} corresponds to the case where
$A^{(N)}_{ij} \eqlaw \chi$ for $i\neq j$ and   $A^{(N)}_{ii} \eqlaw \chi'$,
where the centered random variables $\chi$ and $\chi'$ do not depend on $N$,
$\EE \chi^2 = 1$, and $\chi$ and $\chi'$ have all their moments finite.  Note
that in this case, the convergence~\eqref{cvg-normsym} is a standard result in
Random Matrix theory \cite{bai-sil-book,pas-livre}. 
\end{remark}

\begin{remark}[Sparse models]
Sparsity of the food interactions is often justified from an ecological point of view, see~\cite{bus-etal-17}. 

Beyond the model described in Remark \ref{rem:wigner}, some
sparse models can also be covered by Assumption~\ref{ass:univ-goe}, as the
following example shows: Let $p_N \in (0,1),$ and
\[
A^{(N)}_{ij} = \left\{\begin{array}{cl}  
 1 / \sqrt{p_N} & \text{with probability } p_N / 2 \\
-1 / \sqrt{p_N} & \text{with probability } p_N /2 \\ 
 0 & \text{with probability } 1 - p_N . 
 \end{array}\right. 
\]
Since $\EE \bigl| A^{(N)}_{ij} \bigr|^k = p_N^{1-k/2}$, the moment condition in
Assumption~\ref{ass:univ-goe} is satisfied as soon as $N p_N \xrightarrow[N\rightarrow\infty]{} \infty$.
Furthermore, the spectral norm convergence condition~\eqref{cvg-normsym} is
satisfied when $\frac{Np_N}{\log N} \xrightarrow[N\rightarrow\infty]{} \infty$, as shown in \cite{ben-bor-kno-20}, see also
\cite{ben-bor-kno-19}. Therefore, according to this model, a species within our
LV system can interact with an average number of species much smaller than $N$ but of an order 
$\gg \log N$. 
\end{remark}

We are now in position to state a non-Gaussian version of Theorem \ref{th:main-wigner}:

\begin{theorem}[Non-Gaussian symmetric matrix]
\label{th:univ-goe} 
All the conclusions of Theorem~\ref{th:main-wigner} remain true if 
Assumptions~\ref{ass:r} and~\ref{ass:goe} in the statement of this theorem are
replaced with Assumptions~\ref{ass:r-strong} and~\ref{ass:univ-goe} 
respectively.
\end{theorem}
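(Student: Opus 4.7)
Part (i) of Theorem \ref{th:main-wigner} concerns only the deterministic fixed point system \eqref{sys}, so it transfers without modification. For part (ii), the globally stable equilibrium $\bs{x}^\star_N$ exists as soon as $\|\Sigma_N\| < 1$, and under Assumption~\ref{ass:univ-goe} this follows eventually almost surely from the assumed convergence $\|A_N\|/\sqrt{N} \to 2$ in \eqref{cvg-normsym} together with $\kappa > 2$. Hence the whole content of the theorem reduces to transferring the convergence \eqref{cvg-muN} from the GOE setting to the setting of Assumption~\ref{ass:univ-goe}.

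The plan is to revisit the AMP argument used in Section~\ref{sec:proof-GOE} to prove Theorem~\ref{th:main-wigner} and replace the Gaussian AMP state evolution by its universal counterpart. Recall that in the Gaussian case, a symmetric AMP sequence $(\bs{\xi}^k)_{k\ge 0}$ is constructed around the measurement matrix $\Sigma_N = A_N/(\kappa \sqrt{N})$ and initialized from $\bs{r}_N$; a suitable pseudo-Lipschitz transformation of finitely many iterates $\bs{\xi}^1,\dots,\bs{\xi}^K$ approximates the LCP solution $\bs{x}^\star_N$ in empirical $\cP_2$-distance, with an error that can be made arbitrarily small uniformly in $N$ by choosing $K$ large. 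Once this approximation is established, \eqref{cvg-muN} is obtained by combining the density evolution equations of the AMP with the convergence $\mu^{\bs{r}_N} \to \bar\mu$ in $\cP_2(\RR)$.

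The key step for the universality theorem is that the same AMP recursion can be run with the non-Gaussian matrix $A_N$ of Assumption~\ref{ass:univ-goe}, and that the joint empirical distribution $\mu^{\bs{\xi}^1,\ldots,\bs{\xi}^k}$ still converges to the same Gaussian limit prescribed by the density evolution equations. This is exactly what the symmetric universality results of \cite{bay-lel-mon-15, che-lam-21, dud-lu-sen-(arxiv)22, wan-zho-fan-(arxiv)22} provide, provided that the entries of $A_N$ satisfy a Lindeberg-type moment control and the initialization $\bs{r}_N$ has sufficiently well-controlled moments; the combinatorial moment-method proofs in those papers carry over verbatim. The moment bound $\max_{i,j} N^{1-k/2} \EE|A^{(N)}_{ij}|^k \to 0$ in Assumption~\ref{ass:univ-goe} is precisely the standard hypothesis ensuring this universality, while Assumption~\ref{ass:r-strong} (analyticity of the moment generating function of $\bar\mu$ near zero, hence convergence of $\mu^{\bs{r}_N}$ to $\bar\mu$ in every $\cP_k$) provides the high-moment control on the AMP input needed by these references to handle polynomial test functions in the moment method.

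The remainder of the argument is then identical to Section~\ref{sec:proof-GOE}. First, by the universality of the density evolution, for every fixed $K$ and every pseudo-Lipschitz test function $\varphi\in PL_2(\RR^K)$, the empirical average $N^{-1}\sum_i \varphi(\xi^1_i,\ldots,\xi^K_i)$ converges almost surely to the same limit as in the Gaussian case. Second, the deterministic reduction from the AMP iterates to an approximation of $\bs{x}^\star_N$ only uses the LCP characterization \eqref{eq:LCP-first-sight}, the bound $\|\Sigma_N\| < 1$, and pseudo-Lipschitz continuity of the nonlinearities involved (of which the most delicate is the positive-part map that defines the LCP denoiser); none of this relies on Gaussianity. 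Finally, passing from pseudo-Lipschitz convergence of joint empirical laws to the $\cP_2(\RR)$ convergence \eqref{cvg-muN} uses Lemma~\ref{lemma:conv-P2}. The main technical obstacle is not conceptual but bookkeeping: one must verify that the versions of AMP universality in the literature apply directly to the particular symmetric recursion of Section~\ref{sec:proof-GOE} with its specific (only locally Lipschitz, piecewise linear) nonlinearities, which may require a standard truncation/regularization argument controlled by the higher moments supplied by Assumption~\ref{ass:r-strong} and by the spectral norm bound in \eqref{cvg-normsym}.
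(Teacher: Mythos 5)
Your proposal takes essentially the same route as the paper: reduce the statement to showing that Proposition~\ref{prop:amp-de} (the AMP state evolution) remains valid under Assumptions~\ref{ass:univ-goe} and~\ref{ass:r-strong}, and then invoke an AMP universality result --- the paper is simply more specific, citing \cite[Theorem~2.4]{wan-zho-fan-(arxiv)22} as giving this directly. Your closing caveat over-hedges slightly: the denoiser $h(u,a) = (u+a)_+/\delta$ used in \eqref{amp} is globally (not merely locally) Lipschitz, so no extra truncation or regularization step beyond the hypotheses of the cited universality theorem is required.
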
 

Elements of proof are provided in Appendix \ref{app:universality}.

We now provide the proper assumption to state a non-Gaussian version of Theorem \ref{th:wishart-main}.

\begin{assumption}
\label{ass:univ-wish} 
\begin{itemize}
\item The $P\times N$ random matrix $B_N = \left( B^{(N)}_{ij} \right)_{i,j=1}^{P,N}$
is such that the random variables $B^{(N)}_{ij}$ for $i\in[P]$ and $j\in[N]$ 
are centered, independent, with variance one and satisfy
\[
\max_{i,j} 
  P^{1-k/2} \EE \bigl| B^{(N)}_{ij} \bigr|^k \xrightarrow[N\to\infty]{} 0\,,\qquad (k\ge 3)\, .
\]
We denote by 
$$
\Sigma_N = \frac{B_N^\T B_N}{\kappa P}\, .
$$
\item Moreover, $N = N(P)$, and there exists $c > 0$ such that 
\[
\frac{N(P)}{N} \xrightarrow[P\to\infty]{} c . 
\]
\item Finally, in this asymptotic regime, the convergence 
\begin{equation} 
\label{cvg-normasym}
\left\|\frac {B_N^\T B_N}{P}\right\| \ \xrightarrow[P\to\infty]{\text{a.s.}}
  \  (1+\sqrt{c})^2  
\end{equation} 
holds true. 
\end{itemize}
\end{assumption} 
\begin{remark}
The standard model for a matrix $B_N$ satisfying this assumption is the model
for which $B^{(N)}_{ij} \eqlaw \chi$, where $\chi$ is a centered
random variable with unit variance having all its moments finite. In this case, the 
convergence~\eqref{cvg-normasym} is a standard random matrix theory result 
\cite{bai-sil-book,pas-livre}. 
\end{remark}

With this assumption at hand, we are in position to provide a counterpart to Theorem \ref{th:wishart-main}.

\begin{theorem}[Non-Gaussian Wishart matrices]
\label{th:asym-univ} 
All the conclusions of Theorem~\ref{th:wishart-main} remain true if
Assumption~\ref{ass:r} is replaced with Assumption~\ref{ass:r-strong} and 
Assumptions~\ref{ass:Wishart-matrix} and~\ref{ass:MP} are replaced with Assumption~\ref{ass:univ-wish} in the statement of this 
theorem.
\end{theorem}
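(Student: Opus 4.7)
The plan is to mirror the proof of Theorem~\ref{th:wishart-main} and replace, step by step, the places where Gaussianity is invoked by the corresponding non-Gaussian universality inputs. The proof of Theorem~\ref{th:wishart-main} in Section~\ref{sec:proof-wishart} rests on three pillars: (a)~the existence, uniqueness and characterization of $\bs{x}^\star_N$ as the LCP solution, which requires only $\|\Sigma_N\|<1$ eventually; (b)~the construction of an asymmetric AMP iteration, built around the rectangular measurement matrix $B_N/\sqrt{P}$, whose iterates approximate $\bs{x}^\star_N$ after the appropriate nonlinear/Moreau-type transformation; and (c)~the identification of the limit of the joint empirical distribution of the AMP iterates with the Density Evolution solution, whence the statistics of $\bs{x}^\star_N$ follow. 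Pillar~(a) is already assumption-free (it only needs $\|\Sigma_N\|<1$), and Assumption~\ref{ass:univ-wish} directly provides the required spectral norm convergence~\eqref{cvg-normasym} together with $\kappa>(1+\sqrt c)^2$, so items (i) and (ii) of Theorem~\ref{th:wishart-main} transfer verbatim.

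The substantive work lies in pillar~(c), namely showing that the state evolution recursion governing the AMP iterates built from $B_N/\sqrt{P}$ continues to describe the large-$N,P$ behavior of the empirical distribution of the iterates when the entries of $B_N$ are only assumed centered, unit-variance and satisfying the uniform moment control of Assumption~\ref{ass:univ-wish}. The natural route is to appeal to an off-the-shelf AMP universality theorem for rectangular matrices: the results of \cite{bay-lel-mon-15} already cover i.i.d.~entries with sufficient moments in the symmetric and asymmetric case, and the stronger results of \cite{che-lam-21,dud-lu-sen-(arxiv)22,wan-zho-fan-(arxiv)22} allow one to conclude that for any finite collection of AMP iterates $\bs{\xi}^1,\ldots,\bs{\xi}^k$ and any pseudo-Lipschitz test function $\varphi \in PL_k$, the empirical averages $\frac{1}{N}\sum_i \varphi(\xi^1_i,\ldots,\xi^k_i)$ converge almost surely to the same Gaussian expectation as in the Gaussian Wishart case. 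The strengthened moment assumption on $\bs{r}_N$ in Assumption~\ref{ass:r-strong} — in particular, existence of all moments and $\mathcal P_k$-convergence for every $k$ — is precisely what the universality results require in order to handle the nonlinear denoisers entering the AMP recursion.

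Once the state evolution is shown to be universal, the passage from AMP iterates to $\bs{x}^\star_N$ is formally identical to the Gaussian case: one combines the uniform approximation of $\bs{x}^\star_N$ by a suitable AMP iterate (proven via the monotonicity/contraction arguments valid for any $\Sigma_N$ with $\|\Sigma_N\|<1$) with the universal Density Evolution fixed point to conclude~\eqref{cvg-wishart-mu}. Concretely, the plan is: first, verify that the deterministic part of Section~\ref{sec:proof-wishart} (LCP reduction, fixed-point analysis of~\eqref{eq:sys-wishart}, control of residuals in terms of $\|\Sigma_N\|$) is independent of the law of the entries; second, restate the AMP recursion used for the Gaussian Wishart proof and, rather than quoting \cite{bay-mon-11,bol-14}, quote the universal AMP theorem for rectangular i.i.d.~matrices; third, recover item (iii) by feeding the universal DE into the approximation bound.

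The main obstacle I anticipate is matching the precise technical hypotheses of the available universality theorems to our setup: the denoisers we use are Moreau-type nonlinearities associated with the LCP and they depend on the disorder $\bs{r}_N$, so one must verify their Lipschitz regularity and the pseudo-Lipschitz composition needed by, e.g., \cite{dud-lu-sen-(arxiv)22,wan-zho-fan-(arxiv)22}, and check that the joint convergence $\mu^{\bs{r}_N}\to\bar\mu$ in every $\mathcal P_k$ can be coupled with the universality of the iterates. A secondary, but routine, point is to ensure that the almost sure (rather than in-probability) mode of convergence in~\eqref{cvg-wishart-mu} survives the universality argument; this is typically handled by a Borel–Cantelli argument relying on the moment bounds $\max_{i,j} P^{1-k/2}\,\EE|B^{(N)}_{ij}|^k \to 0$ and the spectral norm convergence~\eqref{cvg-normasym}. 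Once these technical verifications are in place, Theorem~\ref{th:asym-univ} follows by the same concluding argument as Theorem~\ref{th:wishart-main}.
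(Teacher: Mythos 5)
Your high-level plan coincides with the paper's: reduce the problem to establishing that the density evolution recursion (Proposition~\ref{prop-ampasym}) remains valid under the weakened moment assumptions, then feed that into the already-established deterministic machinery (LCP reduction, residual control, approximation of $\bs{x}^\star_N$ by AMP iterates). You also correctly identify \cite{wan-zho-fan-(arxiv)22} as the relevant universality input. The place where your route diverges from the paper's is precisely the ``main obstacle'' you flag but do not resolve: the paper does \emph{not} invoke a rectangular-matrix universality theorem directly. Instead, it uses the Javanmard--Montanari embedding of \cite{jav-mon-13}, which packages the $P\times N$ matrix $B_N$ into a single symmetric $(N+P)\times(N+P)$ matrix, thereby recasting the asymmetric recursion \eqref{eq:recursion} as an instance of the GOE-type recursion \eqref{amp-goe}. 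One can then quote the \emph{symmetric} universality result \cite[Theorem~2.4]{wan-zho-fan-(arxiv)22} as a black box, exactly as was done for Theorem~\ref{th:univ-goe}, and Assumption~\ref{ass:univ-wish} together with Assumption~\ref{ass:r-strong} is designed to match its hypotheses after the embedding. Your proposal of applying a rectangular universality theorem directly is a legitimate alternative provided such a theorem with matching hypotheses exists, but the embedding has the advantage that it makes the symmetric and asymmetric cases structurally identical, so no new hypothesis-checking is needed beyond what was already done for Theorem~\ref{th:univ-goe}. In short: your argument is correct in outline but leaves a nontrivial gap (how to match the rectangular setup to an available universality theorem) exactly where the paper inserts the embedding step; filling that gap is what the cited embedding buys you.
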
 
Elements of proof are provided in Appendix \ref{app:universality}.

\section{Proof of Theorem~\ref{th:main-wigner}} 
\label{sec:proof-GOE} 

\subsection{Outline of the proof} \label{subsec:outline}
There are four steps in the proof. 

\subsubsection*{Step 1} In Section \ref{subsec:system}, we establish the uniqueness and existence of parameters $\delta$, $\sigma$ and $\gamma$, solutions to system \eqref{sys}. These parameters will play a crucial role to design an AMP algorithm fitted for our purpose. Equations \eqref{sys-kappa}-\eqref{sys-gamma} will progressively appear during the proof.

\subsubsection*{Step 2} In Section \ref{subsec-lcp}, we characterize the stable equilibrium $\bs{x}^\star_N$ of \eqref{eq:LV-system} as the solution of a Linear Complementarity Problem (LCP). We give an equivalent formulation of the solution of a LCP as the solution of a fixed-point equation, see Proposition
\ref{prop:fixed-point-LCP}.

\subsubsection*{Step 3} In Section \ref{subsec:GOE-algo}, we first recall some general facts about Approximate Message Passing (AMP) algorithms and present a specific algorithm \eqref{amp} whose output $(\bs{\xi}^k_N)_+$  will converge toward $\bs{x}^\star_N$, characterized as the solution of the fixed-point equation associated to the corresponding LCP. The approximate fixed-point equation satisfied by $\bs{\xi}^k_N$ is given in \eqref{eq:LCP-almost}, see also \eqref{eq:LCP-almost-bis}.

\subsubsection*{Step 4} The strength of the AMP procedure is that we can track down via the Density Evolution (DE) equations the asymptotic distribution of $(\bs{\xi}_N^k)_+$'s empirical measure for any $k$. We can then transfer if to $\bs{x}^\star_N$ by using a perturbation result by Chen and Xiang in \cite{che-xia-07}, see \eqref{eq:bound-LCP}.
A central argument borrowed from Montanari and Richard \cite{mon-ric-16} is that vectors $\bs{\xi}^k_N$ tend to be aligned for large $k$.

\subsection{Existence and uniqueness of the solution of system~\eqref{sys}} 
\label{subsec:system}

We begin with the following technical lemma, the third part of which will be used in Section \ref{subsec:GOE-algo}. To avoid any ambiguity, we shall always refer to $\sigma$ as the unique positive root of $\sigma^2>0$. 
\begin{lemma}
\label{lemma:point-fixe}
Let $\bar r$ be a non negative r.v. with ${\mathcal L}(\bar r)\neq \delta_0$.
\begin{enumerate}[label=(\roman*)]
\item For a given $\delta > 0$, Equation~\eqref{sys-omega} admits a solution
$\sigma^2$ if and only if $\delta > 1 / \sqrt{2}$. In this case, this 
solution is unique, and is denoted by $\sigma^2(\delta)$. 
\item Let $\delta > 1 / \sqrt{2}$ then
$$
\mathbb{P} \{ \sigma(\delta) \oZ +\bar r \ge 0 \} < \delta^2\, .
$$
\item Assume $\delta > 1 / \sqrt{2}$. Starting with an arbitrary $\sigma_0 \geq 0$, 
consider the iterative scheme:
$$
\sigma^2_{t+1} = \frac{1}{\delta^2} 
  \EE \left( \sigma_t \oZ + \bar r \right)_+^2\,, \qquad \textrm{then}\quad \sigma^2_{t} \xrightarrow[t\to\infty]{} \sigma^2(\delta)\, .$$
\end{enumerate}
\end{lemma}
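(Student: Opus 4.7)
The plan is to work with $s := \sigma^2 \in [0,\infty)$ and analyze the map
$$
F(s) \;=\; \frac{1}{\delta^2}\,\EE\bigl(\sqrt{s}\,\oZ + \bar r\bigr)_+^2 ,
$$
so that \eqref{sys-omega} becomes the fixed-point equation $s = F(s)$. The crucial input is a closed-form expression for $F'$: conditioning on $\bar r$ and applying Gaussian integration by parts (Stein's identity) to $z \mapsto (\sqrt{s}\,z + \bar r)_+^2$, whose derivative is $2\sqrt{s}(\sqrt{s}\,z + \bar r)_+$, I expect to obtain, for $s > 0$,
$$
F'(s) \;=\; \frac{1}{\delta^2}\,\PP\bigl(\sqrt{s}\,\oZ + \bar r > 0\bigr) \;=\; \frac{1}{\delta^2}\,\EE\bigl[\Phi(\bar r/\sqrt{s})\bigr].
$$
From this three facts follow: (a) $F' > 0$, so $F$ is strictly increasing; (b) since $\PP(\bar r > 0) > 0$ (as $\bar\mu \neq \delta_0$ and $\bar r \ge 0$), the map $s \mapsto \EE[\Phi(\bar r/\sqrt{s})]$ is strictly decreasing, hence $F$ is strictly concave; (c) $F'(s) \to 1/(2\delta^2)$ as $s \to \infty$ by dominated convergence. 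Combined with the elementary identity $F(0) = \EE\bar r^2/\delta^2 > 0$, these properties encode everything needed.

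\textbf{Conclusions (i)--(iii).} For part (i), introduce $G(s) := F(s) - s$: strictly concave with $G(0) > 0$. If $\delta > 1/\sqrt 2$, then $G'(s) \to 1/(2\delta^2) - 1 < 0$, so $G(s) \to -\infty$ and strict concavity gives exactly one zero $s^\star > 0$, which defines $\sigma(\delta) = \sqrt{s^\star}$. If $\delta \le 1/\sqrt 2$, the inequality $\PP(\sqrt s\,\oZ + \bar r > 0) > 1/2$ (strict by (b)) yields $F'(s) > 1/(2\delta^2) \ge 1$, so $G$ is strictly increasing, and $G(0) > 0$ forces $G > 0$ throughout, ruling out fixed points. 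For part (ii), at the unique zero $s^\star$ of a strictly concave $G$ with $G(0) > 0$ and $G(+\infty) = -\infty$, necessarily $G'(s^\star) < 0$, which reads $\PP(\sigma(\delta)\oZ + \bar r > 0) < \delta^2$; the same holds for the closed event $\{\ge 0\}$ because $\sigma(\delta) > 0$ and $\oZ$ is continuous. For part (iii), the strict monotonicity of $F$ together with the sign pattern $F(s) > s$ for $s < s^\star$ and $F(s) < s$ for $s > s^\star$ makes the iterates $s_t := \sigma_t^2$ monotone (increasing if $s_0 < s^\star$, decreasing if $s_0 > s^\star$) and trapped on the side of $s^\star$ where they started, hence convergent to $s^\star$.

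\textbf{Main obstacle.} I expect the only subtle technical point to be the rigorous derivation of the formula for $F'(s)$, since $(\sqrt{s}\,z + r)_+^2$ is only of class $C^1$ and differentiating under the expectation requires justification. This can be handled either via a smoothing argument using a standard mollifier and dominated convergence, or by a direct computation of the conditional integral $\int_{-r/\sqrt{s}}^{\infty} (\sqrt{s}\,z + r)^2 \phi(z)\,dz$ followed by differentiation in $s$; the boundary terms at $z = -r/\sqrt{s}$ cancel and the clean form announced above emerges. Once this identity is secured, the remainder of the proof is elementary convex analysis of a single real variable.
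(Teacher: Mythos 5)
Your proposal is correct and follows essentially the same route as the paper: both compute $\frac{d}{d\sigma^2}\EE(\sigma\oZ+\bar r)_+^2 = \PP(\sigma\oZ+\bar r\ge 0)$ via Gaussian integration by parts conditioned on $\bar r$, note that this derivative decreases to $1/2$, introduce the auxiliary function ($G$, the paper's $g$) to reduce (i) and (ii) to a one-variable monotonicity/concavity analysis, and prove (iii) by the monotone-trapped-iterate argument. The only cosmetic difference is your emphasis on strict concavity of $F$ where the paper works directly with the (weakly) decreasing derivative of $g$, but the conclusions drawn are identical.
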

Proof of Lemma \ref{lemma:point-fixe} is postponed to Appendix \ref{proof:point-fixe}.

\vspace{0.25cm}

We now establish that system~\eqref{sys} has a unique solution
$$(\delta,\sigma,\gamma) \in (1/\sqrt{2},\infty) \times (0,\infty) \times
(0,1)\,.$$ Let $\delta>1/\sqrt{2}$, $\sigma^2(\delta)$ be defined by \eqref{sys-omega}, and $\gamma(\delta)$ by \eqref{sys-gamma}. Setting $f(\sigma^2) = \mathbb{E} (\sigma \oZ +\bar r)_+^2$, we have established in the proof of Lemma \ref{lemma:point-fixe}-(i) that 
$$
\gamma(\delta) = \frac{df}{d\sigma^2}\bigg|_{\sigma^2=\sigma^2(\delta)}\,.
$$
Moreover $\gamma(\delta)<\delta^2$ by Lemma \ref{lemma:point-fixe}-(ii). All what remains to show is that the equation 
\begin{equation}
\label{kd} 
\kappa = \delta + \frac{\gamma(\delta)}{\delta} 
\end{equation} 
has a unique solution $\delta > 1 / \sqrt{2}$. We thus need to study the 
behavior of $\gamma(\delta)$. In all the remainder, differentiability issues 
can be easily checked and are skipped. 

Recall that $d f(\sigma^2)/d\sigma^2$ decreases asymptotically to
$1/2$ as $\sigma^2$ increases from $0$ to $\infty$, from which we can deduce that $\sigma^2(\delta) \to \infty$ as 
$\delta \downarrow 1 / \sqrt{2}$ by Lemma \ref{lemma:point-fixe}-(ii). Using the fact that 
$$
\sigma^2(\delta) = \frac{f(\sigma^2(\delta))}{\delta^2}$$
and taking the derivatives with respect to $\delta$, we get that 
\[
\frac{d\sigma^2(\delta)}{d\delta} \Bigl( 
 1 - \frac{1}{\delta^2} 
 \left.\frac{d f(\sigma^2)}{d\sigma^2}\right|_{\sigma^2 = \sigma^2(\delta)} 
 \Bigr) = - \frac{2 f(\sigma^2(\delta))}{\delta^3}, 
\]
which shows that $\sigma^2(\delta)$ is a decreasing function. Hence $\gamma(\delta)$ is increasing since $\sigma \mapsto \mathbb{P}\{\sigma \oZ +\bar r\ge 0)$ is decreasing (cf. proof of Lemma \ref{lemma:point-fixe}).

We can now conclude. For $\delta \downarrow 1/\sqrt{2}$, $\sigma^2(\delta) \to \infty$ by what
precedes, thus, $\gamma(\delta) \downarrow 1/2$, and   
$\delta + \gamma(\delta) / \delta \to \sqrt{2} < \kappa$. 
Near infinity, $\delta + \gamma(\delta) / \delta \sim 
\delta > \kappa$. Consequently, Eq.~\eqref{kd} has a solution by 
continuity. To establish uniqueness, we prove that the function $\delta\mapsto 
\delta + \gamma(\delta) / \delta$ is increasing. Indeed, 
\[
\frac{d}{d\delta}\left( \delta + \frac{\gamma(\delta)}{\delta} \right)
 = 1 + \frac{\gamma'(\delta)}{\delta} - \frac{\gamma(\delta)}{\delta^2} 
 \geq  1 - \frac{\gamma(\delta)}{\delta^2} > 0 
\]
as shown by Lemma \ref{lemma:point-fixe}-(ii), and we are done. Proof of Theorem \ref{th:main-wigner}-(i) is completed.

\subsection{Characterization of $x^\star_N$ through a LCP} 
\label{subsec-lcp} 

In this section, we recall the connection between the possible stable equilibrium of the ODE \eqref{eq:LV-system} and the solution of an underlying LCP in the theory of mathematical programming. We mainly rely on chapter 3 of Takeuchi's book \cite{tak-livre96}. 


Given a matrix $M \in \RR^{N\times N}$ and a vector $\bs{c} \in \RR^N$, the LCP
problem, denoted as $\LCP(M,\bs{c})$, consists in finding couples of vectors 
$(\bs{y}, \bs{w}) \in \RR^N \times \RR^N$ satisfying
\begin{equation}\label{eq:def-LCP}
\left\{
\begin{array}{lcl}
 \bs{w} &=& M \bs{y} + \bs{c} \succcurlyeq 0\,, \\ 
 \bs{y} &\succcurlyeq& 0\,, \\ 
 \bs{w}^\T \bs{y} &=&0\,.
 \end{array}
 \right.
\end{equation}
Notice that the last condition can be written equivalently either $w_i y_i=0$ for all $i\in [N]$ or 
$\support(\bs{w}) \cap \support(\bs{y}) = \emptyset$. When a solution $(\bs{y}, \bs{w})$ exists we write $\bs{y} \in \LCP(M,\bs{c})$. If a solution exists and is unique, we write
$$\bs{y} = \LCP(M,\bs{c})\,.$$
A necessary and sufficient condition for the existence of a unique solution to the LCP problem has been given by Murty \cite{murty1972number}, see also \cite{cot-pan-sto-livre09}. For a symmetric matrix, this condition is simply to be positive definite. 

The following proposition establishes a connection between the solution of an LCP problem and globally stable equilibrium for a LV system .  

\begin{proposition}[Lemma 3.2.2 and Theorem 3.2.1 of \cite{tak-livre96}] \label{prop:lyap}
Given a symmetric matrix $B\in\RR^{N\times N}$ and a vector $\bs{c}\in\RR^N$, 
consider the following LV system of ODE: 
\begin{equation}
\label{tak} 
 \frac{d\bs{y}}{dt} (t) = \bs{y}(t) \odot \left( \bs{c} + B \bs{y}(t) \right) \,,  \quad
 \bs{y}(0) \succ 0\, .
\end{equation} 
for all $t\ge 0$. Then, the LCP problem $\LCP(-B,-\bs{c})$ has an unique solution for each $\bs{c} \in
\RR^N$ if and only if $B<0$, i.e. $B$ is negative definite.
On the 
domain where $B < 0, c \in \RR^N$, the function $x = \LCP(-B,-c)$ is 
measurable. Moreover, if $B < 0$, then for every $\bs{c} \in \RR^N$, the ODE~\eqref{tak} has a 
globally stable equilibrium $\bs{y}^\star$ given by $\bs{y}^\star = \LCP(-B,-\bs{c})$. 
\end{proposition}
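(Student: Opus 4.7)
The plan is to treat the two assertions in Proposition~\ref{prop:lyap} separately. For the LCP equivalence, I would exploit the symmetry of $B$ to recast $\LCP(-B,-\bs{c})$ as the KKT system of the quadratic program
\[
\min_{\bs{y} \succcurlyeq 0} Q(\bs{y})\,, \qquad Q(\bs{y}) = -\tfrac12 \bs{y}^\T B \bs{y} - \bs{c}^\T \bs{y}.
\]
Since $B < 0$, $Q$ is strictly convex and coercive on the closed convex cone $\RR_+^N$, hence admits a unique minimizer $\bs{y}^\star$. Its first-order KKT conditions are exactly $\bs{y}^\star \succcurlyeq 0$, $\bs{w} := -B\bs{y}^\star - \bs{c} \succcurlyeq 0$ and $y_i^\star w_i = 0$ for all $i$, i.e.\ $\bs{y}^\star = \LCP(-B,-\bs{c})$. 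Conversely, if $B$ is symmetric but not negative definite, I would pick $\bs{v} \neq 0$ with $\bs{v}^\T B \bs{v} \ge 0$ and construct a $\bs{c}$ for which uniqueness or existence breaks down; this is the symmetric special case of Murty's characterization of $P$-matrices.

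For the measurability claim, once uniqueness holds, one decomposes $\{B<0\}\times\RR^N$ according to the (finite) collection of candidate supports: on the Borel set of $(B,\bs{c})$ for which $S \subseteq [N]$ is the support of the solution, the solution is given in closed form by $\bs{y}_S^\star = -B_{SS}^{-1}\bs{c}_S$ and $\bs{y}_{S^c}^\star = 0$. Being a finite disjunction of Borel-measurable maps, $(B,\bs{c}) \mapsto \bs{y}^\star$ is Borel measurable.

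For global stability, I would use the classical Goh--Takeuchi Lyapunov function adapted to vanishing species. With $S = \support(\bs{y}^\star)$, define for $\bs{y} \succ 0$
\[
V(\bs{y}) = \sum_{i \in S} \left[ y_i - y_i^\star - y_i^\star \log\frac{y_i}{y_i^\star} \right] + \sum_{i \notin S} y_i.
\]
Each summand is non-negative and vanishes precisely at $\bs{y} = \bs{y}^\star$. Writing $f_i(\bs{y}) = c_i + (B\bs{y})_i$, the complementarity relations $f_i(\bs{y}^\star) = 0$ on $S$ and $f_i(\bs{y}^\star) \le 0$ off $S$ collapse the derivative along \eqref{tak} into
\[
\frac{dV}{dt}(\bs{y}(t)) = (\bs{y}(t) - \bs{y}^\star)^\T B (\bs{y}(t) - \bs{y}^\star) + \sum_{i \notin S} y_i(t)\, f_i(\bs{y}^\star) \ \le \ 0,
\]
with equality if and only if $\bs{y}(t) = \bs{y}^\star$. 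An a priori bound on trajectories (which follows from compactness of the sublevel sets of $V$ combined with invariance of $\RR_+^N$ under \eqref{tak}) together with LaSalle's invariance principle then delivers $\bs{y}(t) \to \bs{y}^\star$ for every $\bs{y}(0) \succ 0$.

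The main obstacle will be the bookkeeping in the derivative computation above: one must separate indices inside and outside $S$, insert the correct half of the LCP system in each case, and verify that the cross-terms reassemble into the negative-definite quadratic form $(\bs{y} - \bs{y}^\star)^\T B (\bs{y} - \bs{y}^\star)$. The converse direction of the LCP equivalence and the measurability statement are comparatively routine once the forward direction and the closed-form expression for $\bs{y}^\star$ are in hand.
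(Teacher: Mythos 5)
The paper does not actually prove this proposition; it cites Lemma~3.2.2 and Theorem~3.2.1 of Takeuchi's book and only adds one interpretive sentence afterwards. Your sketch is essentially a reconstruction of the standard argument one finds there, and it is correct in its main lines: the LCP as the KKT system of the strictly convex quadratic program $\min_{\bs{y}\succcurlyeq 0}\{-\tfrac12\bs{y}^\T B\bs{y}-\bs{c}^\T\bs{y}\}$, the support-wise decomposition for measurability, and the Goh--Takeuchi Lyapunov function
\[
V(\bs{y})=\sum_{i\in S}\Bigl(y_i-y_i^\star-y_i^\star\log\frac{y_i}{y_i^\star}\Bigr)+\sum_{i\notin S}y_i,
\]
whose derivative along the flow does indeed reduce to $(\bs{y}-\bs{y}^\star)^\T B(\bs{y}-\bs{y}^\star)+\sum_{i\notin S}y_i\,f_i(\bs{y}^\star)\le 0$ once you use $f_i(\bs{y}^\star)=0$ on $S$ and $f_i(\bs{y}^\star)\le 0$ off $S$.

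A few places deserve tightening. (1) Your measurability argument, as stated, is slightly circular (``the Borel set of $(B,\bs{c})$ for which $S$ is the support of the solution'' presupposes knowledge of the solution); fix this by defining, for each $S\subseteq[N]$, the candidate $\bs{y}_S$ through the closed-form formula and letting $E_S$ be the Borel set where $\bs{y}_S$ satisfies the three LCP constraints, then selecting among the overlapping $E_S$ in some fixed order and appealing to uniqueness. (2) The ``if and only if'' in ``equality iff $\bs{y}=\bs{y}^\star$'' is right but should be justified by the observation that $B<0$ already forces the first term to vanish only at $\bs{y}^\star$, making the second term's sign irrelevant for this equivalence. (3) The Lyapunov function is only defined for $y_i>0$ ($i\in S$), so you should say a word about why trajectories stay in the region where $V$ is finite: since $\dot V\le 0$ and $V\to\infty$ as any $y_i\downarrow 0$ for $i\in S$ (or as $\|\bs{y}\|\to\infty$), the forward orbit is trapped in a compact sublevel set of $V$ inside $\{y_i>0,\,i\in S\}\times\RR_+^{S^c}$, which is exactly what LaSalle needs. (4) The converse implication (not negative definite $\Rightarrow$ LCP uniqueness fails for some $\bs{c}$) is genuinely just Murty's $P$-matrix characterization in the symmetric case, so leaving it as a citation is fine, but you should note that ``symmetric $P$-matrix $\Leftrightarrow$ positive definite'' is the bridge between Murty's general statement and the one here.
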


Indeed, the equilibrium is characterized by the conditions $\bs{y}^\star  \succcurlyeq 0$ and for all $i \in [N],
y_i^\star(c+(By^\star)_i)=0$ whereas the condition $-c - By^\star  \preccurlyeq 0$ (with the obvious meaning of $\preccurlyeq$) turns out to be a necessary
condition for the equilibrium $y^\star$ to be stable in the classical
sense of Lyapounov theory (see \cite[Chapter 3]{tak-livre96} to recall the
different notions of stability, and \cite[Theorem~3.2.5]{tak-livre96} for this
result).

Going back to system \eqref{eq:LV-system}, a potential equilibrium $\bs{x}_N^\star$ should satisfy
$$
\bs{x}_N^\star \succcurlyeq 0 \qquad \textrm{and} \qquad x^\star_i \left( 
r_i - \left[(I_N - \Sigma_N)\bs{x}^\star_N\right]_i \right)=0\quad \textrm{for all}\quad i\in [N]\, 
$$
and
\[
\bs{r}_N + \left(\Sigma_N - I_N \right) \bs{x}_N^\star  \preccurlyeq 0,
\]
which means that the couple $(\bs{x}_N^\star, \bs{w}_N^\star)$ solves the problem $\LCP(I_N - \Sigma_N, -\bs{r}_N)$.

Applying the reminder \eqref{eq:wigner-reminder} and Assumption \ref{ass:kappa}, matrix $I_N- \Sigma_N$ is eventually positive definite with probability one. Define now the vector $\bs{x}_N^\star$ by  
\begin{equation} 
\label{x*lcp} 
\bs{x}_N^\star = \left\{\begin{array}{ll}
\LCP(I_N - \Sigma_N, -\bs{r}_N) & \text{if } \| \Sigma_N \| < 1 , \\
 0 & \text{otherwise} \ .
 \end{array}\right. 
\end{equation}
Then, from Proposition \ref{prop:lyap}, we get that vector $\bs{x}_N^\star$ satisfies the statement of Theorem \ref{th:main-wigner}-(ii).

We end this section by providing an alternative expression of the LCP problem as the solution of a fixed point equation.

\subsubsection*{Alternative expression for the LCP solution} This fact will be useful in the next section.

\begin{proposition}\label{prop:fixed-point-LCP}
Let $\bs{z}=(z_i)\in \RR^N$ and consider the fixed-point equation:
\begin{equation}\label{eq:fixed-point-LCP}
\bs{z} = \Upsilon_N \bs{z}_+ +\bs{\rho}_N
\end{equation}
where $\bs{z}_+ =\left( (z_i)_+\right)$. Then $\bs{z}$ is a solution of \eqref{eq:fixed-point-LCP} iff $\bs{z}_+\in \LCP(I_N - \Upsilon_N, -\bs{\rho}_N)$. 
\end{proposition}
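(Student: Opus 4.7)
I would prove this by a direct computation exploiting the elementary decomposition $\bs{z}=\bs{z}_+-\bs{z}_-$ with $\bs{z}_+,\bs{z}_-\succcurlyeq 0$ disjointly supported (hence $\bs{z}_+^{\T}\bs{z}_-=0$), together with its converse uniqueness: any writing $\bs{v}=\bs{a}-\bs{b}$ with $\bs{a},\bs{b}\succcurlyeq 0$ and $\bs{a}^{\T}\bs{b}=0$ forces $\bs{a}=\bs{v}_+$ and $\bs{b}=\bs{v}_-$. The content of the proposition is then a bijection between FP solutions $\bs{z}$ and LCP solutions $\bs{y}$, via $\bs{y}=\bs{z}_+$ with slack $\bs{w}=\bs{z}_-$.

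For the forward direction, starting from the FP equation $\bs{z}=\Upsilon_N\bs{z}_++\bs{\rho}_N$, I would set $\bs{y}:=\bs{z}_+$ and $\bs{w}:=\bs{z}_-$, both nonnegative by construction. Subtracting $\bs{z}_+-\bs{z}_-=\bs{z}=\Upsilon_N\bs{z}_++\bs{\rho}_N$ and rearranging yields $\bs{w}=(I_N-\Upsilon_N)\bs{y}+(-\bs{\rho}_N)$, which is exactly the slack equation in the LCP definition \eqref{eq:def-LCP} with $M=I_N-\Upsilon_N$ and $\bs{c}=-\bs{\rho}_N$. Complementarity is automatic: $\bs{w}^{\T}\bs{y}=\bs{z}_-^{\T}\bs{z}_+=0$ by disjoint supports. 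All three LCP conditions are met, so $\bs{z}_+\in\LCP(I_N-\Upsilon_N,-\bs{\rho}_N)$.

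For the converse, assume $\bs{z}_+\in\LCP(I_N-\Upsilon_N,-\bs{\rho}_N)$, with slack $\bs{w}:=(I_N-\Upsilon_N)\bs{z}_+-\bs{\rho}_N\succcurlyeq 0$ and $\bs{w}^{\T}\bs{z}_+=0$. Complementarity together with nonnegativity forces the supports of $\bs{w}$ and $\bs{z}_+$ to be disjoint coordinatewise, so the writing $\bs{z}_+-\bs{w}$ is the unique positive/negative-part decomposition of its value. Rearranging the slack equation gives $\Upsilon_N\bs{z}_++\bs{\rho}_N=\bs{z}_+-\bs{w}$, a vector whose positive part is therefore $\bs{z}_+$, matching the positive part of $\bs{z}$. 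Identifying with $\bs{z}=\bs{z}_+-\bs{z}_-$ via the uniqueness of the decomposition yields $\bs{z}_-=\bs{w}$, whence $\bs{z}=\bs{z}_+-\bs{w}=\Upsilon_N\bs{z}_++\bs{\rho}_N$.

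The argument is purely algebraic and no significant obstacle is anticipated. The only noteworthy point is the uniqueness of the positive/negative-part decomposition, which is what allows one to identify $\bs{z}_-$ with the LCP slack $\bs{w}$ in the converse direction and thereby close the equivalence.
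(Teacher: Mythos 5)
Your forward direction is exactly the paper's: decompose $\bs{z}=\bs{z}_+-\bs{z}_-$, observe disjoint supports, and read off the LCP triple with $\bs{y}=\bs{z}_+$, $\bs{w}=\bs{z}_-$. No issues there.

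The converse contains a genuine gap. You fix the given $\bs{z}$, assume $\bs{z}_+\in\LCP(I_N-\Upsilon_N,-\bs{\rho}_N)$ with slack $\bs{w}$, and then assert that ``uniqueness of the decomposition yields $\bs{z}_-=\bs{w}$.'' But uniqueness of the positive/negative-part decomposition only tells you that the single vector $\bs{z}_+-\bs{w}$ has positive part $\bs{z}_+$ and negative part $\bs{w}$; it does not identify $\bs{z}_+-\bs{w}$ with the prescribed $\bs{z}$. Two distinct vectors can share the same positive part while having different negative parts, so there is no reason a priori that $\bs{z}_-=\bs{w}$. Concretely, take $N=1$, $\Upsilon_N=0$, $\bs{\rho}_N=-1$, $\bs{z}=-3$. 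Then $\bs{z}_+=0$ is the unique LCP solution (slack $\bs{w}=1$), but the fixed-point equation reads $z=-1$, which $z=-3$ does not satisfy. This shows that the literal converse ``$\bs{z}_+\in\LCP\Rightarrow\bs{z}$ solves the FP'' is false for an arbitrary pre-given $\bs{z}$.

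The paper sidesteps this by reading the proposition as a bijection between solution sets: given an LCP pair $(\bs{y},\bs{w})$, \emph{define} $\bs{z}:=\bs{y}-\bs{w}$ and verify that this $\bs{z}$ solves the fixed-point equation with $\bs{z}_+=\bs{y}$ and $\bs{z}_-=\bs{w}$. That is the statement actually needed in the sequel (every LCP solution arises as the positive part of some FP solution), and your argument would become correct if you replaced ``identifying with $\bs{z}=\bs{z}_+-\bs{z}_-$'' by ``now redefine $\bs{z}:=\bs{z}_+-\bs{w}$,'' at which point the disjoint-support uniqueness you invoke does precisely what is needed.
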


\begin{proof} Suppose that $\bs{z}$ is a solution of \eqref{eq:fixed-point-LCP} and write $\bs{z}=\bs{z}_+- \bs{z}_-$. Then
$$
\bs{z}_+, \bs{z}_- \succcurlyeq 0\,,\quad (\bs{z}_+)^\T \bs{z}_-=0\quad\textrm{and}\quad  
\bs{z}_-=(I_N- \Upsilon_N) \bs{z}_+ - \bs{\rho}_N\,. 
$$
Hence $\bs{z}_+ \in \LCP(I_N - \Upsilon_N, -\bs{\rho}_N)$. 

To establish the converse, let $(\bs{y},\bs{w})$ a solution of $\LCP(I_N - \Upsilon_N, -\bs{\rho}_N)$. Define $\bs{z}= \bs{y} - \bs{w}$ then 
$$
\begin{cases}
\bs{z}_+ = \bs{y}\\
\bs{z}_-=\bs{w}
\end{cases}
\qquad \textrm{and}\qquad 
\bs{w}=(I_N - \Upsilon_N) \bs{y} - \bs{\rho}_N \quad \Rightarrow \quad \bs{z}= \Upsilon_N \bs{z}_+ +\bs{\rho}_N\, .
$$
\end{proof}

\subsection{Design of an AMP algorithm to approximate the LCP solution} 
\label{subsec:GOE-algo}

\subsubsection*{The AMP principles in a nutshell}
We begin with some of the fundamental results of the AMP theory. 
The now classical form of an
AMP iterative algorithm, as formalized in the article \cite{bay-mon-11} of
Bayati and Montanari based in part on a result of Bolthausen~\cite{bol-14}, 
can be presented as follows. Let
$(h^k)_{k\ge 0}$ be a sequence of Lipschitz $\RR^2 \to \RR$ functions. By the Lipschitz assumption, the derivative 
$$\frac{\partial h^k(u,a)}{\partial u}$$ 
is defined almost everywhere and the function 
$\partial_1 h^k(u, a)$ is any function that coincides with this derivative 
where it is defined. For $\bs{x} = (x_i)_{i\in [N]}$, define by $\ps{\bs{x}}_N$ the scalar quantity:
$$
\ps{\bs{x}}_N := \frac 1N \sum_{i\in [N]} x_i\, .
$$ 

Let $\bs{a}_N \in \RR^N$ be a random vector of so-called auxiliary information. Recall that 
$A_N$ is the GOE matrix introduced in Assumption \ref{ass:goe}. Starting with a 
vector $\bs{u}^0_N \in \RR^N$, the AMP recursion is written  
\begin{equation} 
\label{amp-goe} 
\bs{u}^{k+1}_N = \frac{A_N}{\sqrt{N}} h^k(\bs{u}^k_N, \bs{a}_N) \ - \ 
 \ps{\partial_1 h^k(\bs{u}^k_N, \bs{a}_N)}_N \ h^{k-1}(\bs{u}^{k-1}_N, \bs{a}_N)\,,
\end{equation} 
where $h^k(\bs{u},\bs{a}) = \left(h^k(u_i, a_i)\right)_{i\in [N]}$. 

From this recursion, it is possible to precisely evaluate the
asymptotic behavior of the empirical measures 
$$\mu^{\bs{a}_N, \bs{u}^1_N,
\cdots, \bs{u}^k_N}$$ as $N \to\infty$ for any $k$, and to prove that $\mu^{\bs{a}_N, \bs{u}^1_N,
\cdots, \bs{u}^k_N}$ converges toward a centered Gaussian vector whose covariance structure is defined by the so-called Density Evolution (DE). The term 
$$
 \ps{\partial_1 h^k(\bs{u}^k_N, \bs{a}_N)}_N \ h^{k-1}(\bs{u}^{k-1}_N, \bs{a}_N)
$$
(equal to zero for $k = 0$) is referred to as the Onsager term and plays a crucial role in making possible this convergence. For a detailed exposition of the AMP theory,
along with the description of many of its applications, the reader is referred
to the recent tutorial \cite{fen-etal-(now)22}.

\subsubsection*{A specific AMP algorithm for the LCP}
To establish Theorem~\ref{th:main-wigner}, we design the following
AMP algorithm and study its properties. For each $N$, let $(\bs{u}^0_N, \bs{a}_N) \in \RR^N \times \RR^N$ be a
couple of random vectors independent of $A_N$, with 
$\bs{a}_N \succcurlyeq 0$. Assume that there exists a couple of $L^2$ random 
variables $(\bar u, \bar a)$ such that 
\begin{equation}
\label{u0a} 
(a.s.) \qquad \mu^{\bs{u}^0_N, \bs{a}_N} \quad \xrightarrow[N\to\infty]{\cP_2(\RR^2)} \quad
 \mcL((\bar u, \bar a))\ ,\quad  \bar a \neq 0\, .
\end{equation} 
Vectors $\bs{u}^0_N$ and 
$\bs{a}_N$ will be specified later, see \eqref{eq:def-aN}. Notice that $\bar a \geq 0$. By Assumption \ref{ass:kappa}, $\kappa$ is larger than $\sqrt{2}$ hence \eqref{sys} admits an unique solution $(\delta, \sigma^2, \gamma)$ by the first part of the theorem. Let
$h^t \equiv h$ 
for all $k\ge 0$, where 
\[
h(u,a) = \frac{(u+a)_+}{\delta}\qquad\textrm{and}\qquad \partial_1 h(u,a) = \frac{\1_{\{u+a > 0\}}}{\delta}\, .
\]
The AMP iteration \ref{amp-goe} now reads  
\begin{equation}
\label{amp} 
\bs{u}^{k+1}_N = \frac{A_N}{\delta\sqrt{N}} \left(\bs{u}_N^k + \bs{a}_N\right)_+ - 
 \frac{\ps{\1_{\{\bs{u}^k_N + \bs{a}_N > 0\} }}_N \, \left(\bs{u}_N^{k-1} + \bs{a}_N\right)_+}{\delta^2} . 
\end{equation} 
The DE equations for this algorithm are provided by the following 
proposition, which is a direct application of
\cite[Theorem~2.3]{fen-etal-(now)22} (see also \cite[Theorem~4]{bay-mon-11}):  
\begin{proposition} 
\label{prop:amp-de} 
For $N\ge 1$, Let $A_N$ be a GOE matrix and let $(\bs{u}^0_N, \bs{a}_N) \in \RR^N \times \RR^N$ be a
couple of random vectors independent of $A_N$, with 
$\bs{a}_N \succcurlyeq 0$.
Assume \eqref{u0a} and consider the recursion \eqref{amp}. Then, for every $k \geq 1$, 
\[
(a.s.)\qquad \mu^{\bs{a}_N, \bs{u}^1_N, \cdots, \bs{u}^k_N} \quad 
 \xrightarrow[N\to\infty]{\cP_2(\RR^{k+1})} \quad 
 \mcL((\bar a, Z^1, \ldots, Z^k))\,,
\]
where $\left( Z^1 , \cdots, Z^{k} \right)$ is a centered Gaussian vector,
independent of $(\bar u, \bar a)$. The $ k\times k$ covariance matrix $R^{k}$ of the random vector $\left( Z^1 , \cdots, Z^{k} \right)$   is 
 defined recursively in $k$ as follows: 
$$
R^1 =\EE (Z^1)^2 = \frac{1}{\delta^2} \EE (\bar u + \bar a)_+^2\,,
$$
and given $R^k$, matrix $R^{k+1}$'s first principal submatrix is $R^k$,  
$$
\left[ R^{k+1}\right]_{ij} = \left[ R^{k}\right]_{ij} \qquad \textrm{for} \quad i,j\in [k]\,,
$$
whereas the last row and column of $R^{k+1}$ are defined via the equations:
\[
\left[ R^{k+1} \right]_{k+1,\ell} =  
\EE Z^{k+1} Z^\ell  = \frac{1}{\delta^2} 
 \left\{\begin{array}{lcl} 
  \EE (Z^k + \bar a)_+ (Z^{\ell-1} + \bar a)_+ &\text{if}& 
      \ell \in \{2,\ldots, k+1\} \,, \phantom{\bigg|}\\
  \EE (Z^k + \bar a)_+ (\bar u + \bar a)_+ &\text{if}& \ell = 1\,.  \phantom{\bigg|}
  \end{array}\right. 
\]
\end{proposition}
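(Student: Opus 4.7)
The plan is to recognize this proposition as a direct instance of the general symmetric-matrix AMP master theorem \cite[Theorem~2.3]{fen-etal-(now)22}, so that the proof reduces to verifying the hypotheses and specializing the abstract state-evolution recursion to our particular choice of nonlinearity.

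First I would cast \eqref{amp} into the canonical form \eqref{amp-goe}. Taking $h^k \equiv h$ with $h(u,a) = (u+a)_+/\delta$, the weak derivative is $\partial_1 h(u,a) = \1_{\{u+a>0\}}/\delta$, so the Onsager coefficient $\ps{\partial_1 h(\bs u_N^k,\bs a_N)}_N$ in \eqref{amp-goe} matches $\delta^{-1}\ps{\1_{\{\bs u_N^k+\bs a_N > 0\}}}_N$ and \eqref{amp-goe} reduces exactly to \eqref{amp}. The function $h$ is $(1/\delta)$-Lipschitz in its first argument, uniformly in $a$, which is the regularity required by the master theorem. The initialization hypothesis \eqref{u0a} furnishes the $\cP_2(\RR^2)$-convergence of $\mu^{\bs u_N^0,\bs a_N}$ to $\mcL((\bar u,\bar a))$, and the independence of $(\bs u_N^0,\bs a_N)$ and $A_N$ is assumed. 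The master theorem then yields at once, for every $k\geq 1$, the almost sure convergence of $\mu^{\bs a_N,\bs u_N^1,\ldots,\bs u_N^k}$ in $\cP_2(\RR^{k+1})$ toward the joint law of $(\bar a, Z^1,\ldots,Z^k)$ in which the Gaussian block $(Z^1,\ldots,Z^k)$ is independent of $(\bar u,\bar a)$, with covariances read off from
\[
[R^{k+1}]_{k+1,\ell} = \EE\bigl[h(Z^k,\bar a)\, h(Z^{\ell-1},\bar a)\bigr]
\]
for $\ell \in \{2,\ldots,k+1\}$ (and with $Z^0$ replaced by $\bar u$ when $\ell = 1$). Substituting $h(u,a) = (u+a)_+/\delta$ produces the formulas displayed in the proposition, with $R^1 = \delta^{-2}\EE(\bar u + \bar a)_+^2$ as the base case.

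The only point deserving genuine care is the non-smoothness of $h(\cdot,a)$, which is piecewise affine with derivative discontinuous at $u=-a$. The master theorem of \cite{fen-etal-(now)22} is formulated to accept Lipschitz nonlinearities, so regularity per se is not an obstacle; however, for the DE recursion to be unambiguously defined one must check inductively that the Gaussian limits assign zero mass to the discontinuity hyperplane $\{u+a=0\}$, which amounts to ensuring that the diagonal entries $\EE (Z^k)^2$ remain strictly positive. The base case $R^1 > 0$ follows from $\bar a\succcurlyeq 0$, $\bar a\neq 0$, which gives $\EE(\bar u+\bar a)_+^2 > 0$; the inductive step follows from the monotonicity of $\sigma^2\mapsto \EE(\sigma\oZ+\bar a)_+^2$ already exploited in Section~\ref{subsec:system}. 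Once this non-degeneracy is recorded, the proposition follows verbatim from the cited master theorem, and the main obstacle is really just careful bookkeeping between the abstract formulation and our specific nonlinearity.
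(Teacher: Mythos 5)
Your proposal is correct and matches the paper's own argument: the paper proves this proposition by observing that it is a direct application of the symmetric-matrix AMP master theorem \cite[Theorem~2.3]{fen-etal-(now)22} (see also \cite[Theorem~4]{bay-mon-11}) after casting \eqref{amp} into the canonical form \eqref{amp-goe} with $h(u,a)=(u+a)_+/\delta$, exactly as you do. Your additional remark on non-degeneracy of the Gaussian limits is a reasonable bookkeeping point but is not strictly needed here, since the master theorem is stated for Lipschitz nonlinearities and, in the actual application, $\bar u = 1$ and $\bar a\succcurlyeq 0$ make $R^1>0$ automatic.
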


Notice that by writing $\bs{\alpha}^{k+1} = \left( (\bar u + \bar a)_+, 
  (Z^1 + \bar a)_+, \cdots, (Z^k + \bar a)_+ \right)^\T$, we see 
that $R^{k+1} = \EE \bs{\alpha}^{k+1} (\bs{\alpha}^{k+1})^\T$, which immediately shows that  
$ R^{k+1}$ is a positive semidefinite matrix (actually, one can prove that it is definite, see  
\cite{fen-etal-(now)22}).

Denote by $$\bs{\xi}_N^k = \bs{u}_N^k +\bs{a}_N\, .
$$
What is going to drive the following computations 
is the fact that the vectors $\bs{\xi}_N^k$ and $\bs{\xi}_N^{k+1}$ will tend to be aligned as $N\to \infty$ then $k\to \infty$. This will be formalized and proved in Lemma \ref{lemma:aligned}.
Denote by $\gamma^k_N = \big\langle \1_{\left\{ \bs{\xi}_N^k>0 \right\} }\big \rangle_N$ 
and recall the expression of $\gamma$ given 
in~\eqref{sys-gamma}. With these notations at hand, the AMP recursion \eqref{amp} reads:
\begin{eqnarray*} 
\bs{\xi}_N^{k+1} &=&  \frac{A_N}{\delta\sqrt{N}} \big(\bs{\xi}_N^k\big)_+ 
   - \frac{\gamma^k_N}{\delta^2}  \big(\bs{\xi}_N^{k-1}\big)_+ + \bs{a}_N\ ,  \\
&=&  \frac{A_N}{\delta\sqrt{N}} \big(\bs{\xi}_N^k\big)_+ 
   - \frac{\gamma}{\delta^2}  \big(\bs{\xi}_N^{k-1}\big)_+ + \bs{a}_N  +  \frac{\gamma - \gamma^k_N}{\delta^2} \big(\bs{\xi}_N^{k-1}\big)_+ \ ,\\
&=&     \frac{A_N}{\delta\sqrt{N}} \big(\bs{\xi}_N^k\big)_+ 
   - \frac{\gamma}{\delta^2}  \big(\bs{\xi}_N^{k}\big)_+ + \bs{a}_N  +  \frac{\gamma - \gamma^k_N}{\delta^2} \big(\bs{\xi}_N^{k-1}\big)_+ + \frac{\gamma}{\delta^2}  \left( \big(\bs{\xi}_N^{k}\big)_+ - \big(\bs{\xi}_N^{k-1}\big)_+\right) \ .\\
\end{eqnarray*}
Replacing now $\bs{\xi}_N^{k+1}$ by $\bs{\xi}_N^{k}$, we end up with:
\begin{equation}\label{eq:rough-LCP}
\bs{\xi}^k_N =  \frac{A_N}{\delta\sqrt{N}} \big(\bs{\xi}_N^k\big)_+ 
   - \frac{\gamma}{\delta^2}  \big(\bs{\xi}_N^k\big)_+ + \bs{a}_N + \bs{\varepsilon}^k_N, 
\end{equation}
where 
\begin{equation}
\label{eq:err} 
\bs{\varepsilon}^k_N = 
   \frac{\gamma - \gamma^k_N}{\delta^2} \big(\bs{\xi}_N^{k-1}\big)_+
  +\bs{\xi}_N^{k} - \bs{\xi}_N^{k+1}
  + \frac{\gamma}{\delta^2} \left( \big(\bs{\xi}_N^{k}\big)_+  - \big(\bs{\xi}_N^{k-1}\big)_+ \right)  . 
\end{equation} 
Massaging \eqref{eq:rough-LCP} and relying on \eqref{sys-kappa} we obtain:
\begin{equation}\label{eq:LCP-almost}
\big(\bs{\xi}_N^k\big)_+ - \frac{\big(\bs{\xi}_N^k\big)_-}{1+\gamma/\delta^2} = \frac{A_N}{\kappa\sqrt{N}} \big(\bs{\xi}_N^k\big)_+ +\frac{\delta(\bs{a}_N +\bs{\varepsilon}^k_N)}{\kappa}\, .
\end{equation}
Denote by
$$
\bs{z} = \big(\bs{\xi}_N^k\big)_+ - \frac{\big(\bs{\xi}_N^k\big)_-}{1+\gamma/\delta^2}\, .
$$
Notice that $\bs{z}_+ = \big(\bs{\xi}_N^k\big)_+$ and set finally 
\begin{equation}\label{eq:def-aN}
\bs{u}_N^0=\bs{1}_N\qquad \textrm{and}\qquad \bs{a}_N =\frac{\kappa}{\delta}\bs{r}_N\,.
\end{equation}
With these notations, \eqref{eq:LCP-almost} is rewritten 
\begin{equation}\label{eq:LCP-almost-bis}
\bs{z} = \Sigma_N \bs{z}_+ + \bs{r}_N  + \frac{\delta}{\kappa} \bs{\varepsilon}_N^k\, .
\end{equation}
Relying on Proposition \ref{prop:fixed-point-LCP} and on the fact that $\| \Sigma_N\|<1$ eventually, we conclude that $\bs{z}_+ = \big(\bs{\xi}_N^k\big)_+$ is the unique solution of 
$$
\LCP\left(I_N- \Sigma_N, -\bs{r}_N  - \frac{\delta}{\kappa} \bs{\varepsilon}_N^k\right)
$$ 
for $N$ large enough, which is almost what is aimed, up to the term $\frac{\delta}{\kappa} \bs{\varepsilon}_N^k$ - see Eq. \eqref{x*lcp}.

\begin{remark}
  Retrospectively, notice that with the choice \eqref{eq:def-aN}, assumptions of Proposition \ref{prop:amp-de} are satisfied: $(\bs{u}_N^0,\bs{a}_N)$ is independent of $A_N$ and \eqref{u0a} holds thanks to Assumption~\ref{ass:r} with $\bar a = \frac \kappa \delta \bar r$.     
\end{remark} 

Before bounding $\bs{\varepsilon}_N^k$, let us first study the behavior of $\mu^{\big(\bs{\xi}_N^k\big)_+}$. Applying Proposition \ref{prop:amp-de}, we get that for all $k\ge 2$:
$$
\mu^{\bs{u}^k_N} \xrightarrow[N\to\infty]{\cP_2(\RR)} \mcL(Z^k)\,,
$$
where $Z^k\eqlaw \theta_k \bar Z$ with $\bar Z \eqlaw {\mathcal N}(0,1)$ and $\theta_k$ satisfying the following DE equation:
\begin{equation}\label{eq:DE-intermediaire}
\theta^2_{k+1} = \frac{1}{\delta^2} \EE (\theta_k \oZ + \bar a )_+^2\, .
\end{equation}
Since function $\varphi(u,a) = (u+a)_+$ is Lipschitz, it is clear that
\begin{equation}\label{eq:conv-mu-plus}
\mu^{(\bs{\xi}^k_N)_+} \quad \xrightarrow[N\to\infty]{\cP_2(\RR)} \quad \mcL\left((\theta_k \oZ + 
 \bar a)_+ \right)\, .
\end{equation}
Furthermore, since the distribution function of 
$\theta_k \oZ + \bar a$ has no discontinuity, the following convergence holds:
$$
(a.s.) \qquad \gamma^k_N \xrightarrow[N\to\infty]{} 
   \PP\left( \theta_k \oZ +  \bar a > 0 \right)\qquad \textrm{where}\qquad \gamma^k_N = \big\langle \1_{\left\{ \bs{\xi}_N^k>0 \right\} }\big \rangle_N\, . 
$$
Introduce the quantity:
\begin{equation}\label{eq:lien-sigma-theta}
\sigma_k = \frac{\delta}{\kappa} \theta_k\, .
\end{equation}
Following \eqref{eq:DE-intermediaire}, the recursive equation satisfied by $\sigma_k$ is
$$
\sigma_{k+1}^2 =\frac 1{\delta^2} \mathbb{E} \left( \sigma_k \oZ +\bar r\right)_+^2
$$
which is precisely the equation appearing in Lemma \ref{lemma:point-fixe}-(ii). As a conclusion, $\sigma_k\xrightarrow[k\to\infty]{} \sigma$, where $\sigma$ satisfies \eqref{sys-omega}. This convergence has two interesting consequences:
$$
\PP\left( \theta_k \oZ +  \bar a > 0 \right) = \PP\left( \sigma_k \oZ +  \bar r > 0 \right)
\quad \xrightarrow[k\to\infty]{}\quad  \PP\left( \sigma \oZ +  \bar r > 0 \right) = \gamma\,,
$$
where $\gamma$ satisfies \eqref{sys-gamma}, and
$$
{\mathcal L}\left( (\theta_k \oZ +\bar a)_+ \right) = {\mathcal L}\left( \left( 1+\gamma/\delta^2\right) (\sigma_k \oZ +\bar r)_+ \right)\quad \xrightarrow[k\to\infty]{\cP_2(\RR)}\quad {\mathcal L}\left( \left( 1+\gamma/\delta^2\right) (\sigma \oZ +\bar r)_+ \right)\ ,
$$
the latter being the distribution appearing in Theorem \ref{th:main-wigner}-(iii).

\subsubsection*{Control of the error term $\bs{\varepsilon}^k_N$} 
Recall the expression of $\bs{\varepsilon}^k_N$ given in \eqref{eq:err}: 
$$
\bs{\varepsilon}^k_N = 
   \frac{\gamma - \gamma^k_N}{\delta^2} \big(\bs{\xi}_N^{k-1}\big)_+
  +\bs{\xi}_N^{k} - \bs{\xi}_N^{k+1}
  + \frac{\gamma}{\delta^2} \left( \big(\bs{\xi}_N^{k}\big)_+  - \big(\bs{\xi}_N^{k-1}\big)_+ \right) \, . 
$$ 
A direct consequence of \eqref{eq:conv-mu-plus} yields that 
$$
\frac{\| (\bs{\xi}_N^{k-1})_+ \|^2}{N} \xrightarrow[N\to\infty]{a.s.}\EE\left(\theta_{k-1} \oZ + 
 \bar a\right)_+^2 = \theta_k^2 \delta^2\,.
 $$
In particular, the sequence $\left(\frac{\| (\bs{\xi}_N^{k-1})_+ \|^2}{N} \right)_N$ is 
bounded. Furthermore, 
$\lim_k (a.s.) \lim_N (\gamma - \gamma^k_N) = 0$. We thus have 
\begin{equation}
\label{gg} 
\lim_{k\to\infty} (a.s.)\lim_{N\to\infty} 
   \frac{(\gamma - \gamma^k_N)^2}{\delta^4} \frac{\| (\bs{\xi}_N^{k-1})_+ \|^2}
 {N} = 0 \,.
\end{equation} 

The main idea to control the two remaining terms $\bs{\xi}_N^{k} - \bs{\xi}_N^{k+1}$ and $\big(\bs{\xi}_N^{k}\big)_+ - \big(\bs{\xi}_N^{k-1}\big)_+$ is to establish that the correlation coefficient 
\begin{equation}\label{eq:corr-coefficient}
  Q_k:=\frac{\EE Z^{k-1} Z^{k}} {\theta_{k-1} \theta_k}
\end{equation}
converges to $1$ as $k\to\infty$. This can be interpreted as an alignement of vectors $\bs{\xi}_N^k$ and $\bs{\xi}_N^{k-1}$. This argument was developed in a similar context in \cite{mon-ric-16}, see also 
\cite{don-mon-16}. For self-containedness, we state and prove the following lemma:
\begin{lemma}
\label{lemma:aligned} 
The sequence $(Q_k)_{k\ge 2}$ defined in \eqref{eq:corr-coefficient} 
satisfies $Q_k \xrightarrow[k\to\infty]{} 1$. 
\end{lemma}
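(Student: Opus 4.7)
The plan is to combine the $1$-Lipschitz property of $x\mapsto x_+$, the convergence $\theta_k\to\theta>0$ established just before the lemma, and the fact that Assumption~\ref{ass:kappa} ($\kappa>2$) forces $\delta>1$, so that $\delta^{-2}<1$ plays the role of a contraction rate.

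First, I would rewrite $Q_{k+1}$ via the DE recursions of Proposition~\ref{prop:amp-de}. Taking $\ell=k$ gives $\EE Z^{k+1}Z^k=\delta^{-2}\EE[(Z^k+\bar a)_+(Z^{k-1}+\bar a)_+]$, while $\delta^2\theta_{k+1}^2=\EE[(Z^k+\bar a)_+^2]$ and $\delta^2\theta_k^2=\EE[(Z^{k-1}+\bar a)_+^2]$. Setting $W_j:=(Z^j+\bar a)_+$, this identifies $Q_{k+1}$ with the cosine of the $L^2$-angle between $W_k$ and $W_{k-1}$. Applying $|x_+-y_+|\le|x-y|$ and expanding both norms yields
\[
\delta^{2}\bigl(\theta_{k+1}^2+\theta_k^2-2Q_{k+1}\theta_k\theta_{k+1}\bigr)=\EE(W_k-W_{k-1})^2\le\EE(Z^k-Z^{k-1})^2=\theta_k^2+\theta_{k-1}^2-2Q_k\theta_{k-1}\theta_k.
\]
Since $\theta_j=\theta+o(1)$ for $j\in\{k-1,k,k+1\}$ and $|Q_j|\le 1$, both sides equal $2\theta^2(1-Q)$ up to an additive error vanishing with $k$, so rearranging gives
\[
1-Q_{k+1}\;\le\;\frac{1}{\delta^2}\,(1-Q_k)+\tilde\eta_k,\qquad \tilde\eta_k\xrightarrow[k\to\infty]{}0.
\]

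The fact that $\delta>1$ is guaranteed by Assumption~\ref{ass:kappa}: the strictly increasing map $\delta\mapsto\delta+\gamma(\delta)/\delta$ studied in Section~\ref{subsec:system} takes value $1+\gamma(1)<2$ at $\delta=1$ (as $\gamma(1)=\PP(\sigma(1)\oZ+\bar r>0)<1$ since $\bar r$ is a.s.\ finite), so $\kappa>2$ forces the unique solution $\delta$ to satisfy $\delta>1$. With this in hand, a routine discrete Gr\"onwall argument---if $a_{k+1}\le\alpha a_k+\tilde\eta_k$ with $\alpha\in(0,1)$ and $\tilde\eta_k\to 0$ then $a_k\to 0$---applied to $a_k:=1-Q_k\ge 0$ concludes that $Q_k\to 1$. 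The main technical obstacle is bookkeeping on the additive error $\tilde\eta_k$, which reduces to $|\theta_j-\theta|\to 0$ for $j\in\{k-1,k,k+1\}$, itself a consequence of Lemma~\ref{lemma:point-fixe}~(iii) applied to recursion \eqref{eq:DE-intermediaire}; the Lipschitz bound is deliberately crude, but only the asymptotic contraction ratio $\delta^{-2}<1$ matters for convergence.
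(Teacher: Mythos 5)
Your proof is correct but takes a genuinely different route from the paper's. The paper relies on a monotonicity--fixed-point argument borrowed from Montanari and Richard \cite{mon-ric-16}: it introduces the map $\mathcal{H}(q)=\EE(X_1+W)_+(X_2+W)_+/\EE(X_1+W)_+^2$ for standard Gaussians $X_1,X_2$ with correlation $q$ and $W=\bar a/\theta_\infty$, quotes from \cite{mon-ric-16} that $\mathcal{H}$ is continuous, increasing on $[0,1]$ with $\mathcal{H}(q)>q$ for $q<1$ and $\mathcal{H}(1)=1$, and then deduces from the DE recursion and the convergence $\theta_k\to\theta_\infty$ that $Q_\star:=\liminf_k Q_k$ satisfies $Q_\star\ge\mathcal{H}(Q_\star)$, forcing $Q_\star=1$. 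You bypass the structural properties of $\mathcal{H}$ entirely: you use only the $1$-Lipschitz bound $|x_+-y_+|\le|x-y|$, the quantitative fact $\delta>1$, the convergence $\theta_k\to\theta_\infty>0$, and a discrete Gr\"onwall argument. This is more elementary and self-contained (no external input from \cite{mon-ric-16}), at the cost of being specific to the regime $\kappa>2$: the paper's argument would survive under the weaker hypothesis $\kappa>\sqrt2$, whereas yours genuinely needs $\delta>1$ to get a contraction ratio $\delta^{-2}<1$. Your derivation of $\delta>1$ from $\kappa>2$ is correct; an even shorter route is to combine \eqref{kd} with Lemma~\ref{lemma:point-fixe}(ii): $\kappa=\delta+\gamma(\delta)/\delta<\delta+\delta^2/\delta=2\delta$, so $\delta>\kappa/2>1$. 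Finally, note that the non-negativity $Q_k\ge0$ (needed to ensure $a_k=1-Q_k\in[0,1]$, and also used implicitly in the paper's proof so that $\mathcal{H}$ is applied on $[0,1]$) follows immediately from the DE recursion since $(Z^k+\bar a)_+(Z^{\ell-1}+\bar a)_+\ge0$; you may wish to state this explicitly.
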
 
Proof of Lemma \ref{lemma:aligned} is postponed to Appendix \ref{proof:aligned}.

\vspace{0.25cm}

We now conclude the proof of Theorem \ref{th:main-wigner}.
Consider $\varphi(x_1, x_2) = (x_1 - x_2)^2 \in \PL_2(\mathbb R^2)$. By Proposition \ref{prop:amp-de}, we have
\[
(a.s.) \quad \frac{\| \bs{\xi}^k_N - \bs{\xi}^{k+1}_N \|^2}{N} = \frac 1N \sum_{i=1}^N 
 \varphi(u^k_i , u^{k+1}_{i}) 
 \xrightarrow[N\to\infty]{} \EE \left( Z^{k+1} - Z^k \right)^2 
  = \theta_{k+1}^2 + \theta_k^2 - 2\theta_{k+1} \theta_k Q_{k+1}\, .
\]
Applying Lemma \ref{lemma:aligned}, we get that:
\begin{equation}
\label{xx} 
\lim_{k\to\infty} (a.s.)\lim_{N\to\infty} \frac{\|  \bs{\xi}^k_N - \bs{\xi}^{k+1}_N \|^2}{N} 
 = 0\, .
\end{equation} 
A similar argument applies to the last term. 
\begin{multline*}
\frac 1N \| \big(\bs{\xi}^k_N\big)_+ - \big(\bs{\xi}^{k-1}_N\big)_+ \|^2 = \frac 1N 
\| \big(\bs{u}^k_N +\bs{a}_N\big)_+ - \big(\bs{u}^{k-1}_N+\bs{a}_N\big)_+ \|^2
\\
\xrightarrow[N\to\infty]{a.s.} \quad \mathbb{E} \left( (Z^k+\bar a)_+ - (Z^{k-1} +\bar a)_+\right)^2 
= \EE \left( Z^{k+1} - Z^k \right)^2 \, .
\end{multline*}
Finally, using that 
$$
  \frac{\|\bs{\varepsilon}^k_{N}\|^2}{N} \quad \leq\quad  
\frac 3N  \left( 
\frac{(\gamma - \gamma^k_N)^2}{\delta^4} \|(\bs{\xi}_N^{k-1})_+\|^2
+ \| \bs{\xi}_N^{k} - \bs{\xi}_N^{k+1} \|^2
+ \frac{\gamma^2}{\delta^4} \| \big( \bs{\xi}_N^{k}\big) _+  - (\bs{\xi}_N^{k-1})_+ \|^2
  \right),  
$$
we conclude that 
\begin{equation}
\label{eps} 
\lim_{k\to\infty} (a.s.)\lim_{N\to\infty} \frac{\|\bs{\varepsilon}^k_{N}\|^2}{N} 
 = 0\,.
\end{equation} 
Notice that the fact that the a.s. $\lim_N$ at the left hand side exists can be deduced again from 
Proposition~\ref{prop:amp-de}. 

\subsubsection*{From the approximated LCP to the genuine LCP}
Recall that whenever $\| \Sigma_N \| < 1$, which happens eventually, 
$$
\bs{x}_N^\star=\LCP (I_N-\Sigma_N,-\bs{r}_N)\qquad \textrm{and}\qquad \big(\bs{\xi}_N^k\big)_+= \LCP\left(I_N- \Sigma_N, -\bs{r}_N  - \frac{\delta}{\kappa} \bs{\varepsilon}_N^k\right)\,.
$$
Statistical properties have been established for $\big(\bs{\xi}_N^k\big)_+$ via the AMP procedure, see  
for instance \eqref{eq:conv-mu-plus}. Using LCP perturbation results, we shall identify the limiting empirical distribution of $\bs{x}_N^\star$. Let us introduce:
$$
\mu^\star ={\mathcal L}\left( ( 1+\gamma/\delta^2)(\sigma \oZ + \bar r)_+ \right) =
{\mathcal L}\left( \frac\kappa\delta (\sigma \oZ + \bar r)_+ \right)
$$
In \cite[Th.~2.7, Th.~2.8]{che-xia-07}, Chen and Xiang provide the following bound:
\begin{multline}\label{eq:bound-LCP}
\| \bs{x}^\star_N - (\bs{\xi}^k_N)_+ \| \ \leq\
 \left\| \left( I_N - \Sigma_N \right)^{-1} \right\| \times \frac \kappa \delta \left\| \bs{\varepsilon}^k_N \right\| \ = \ b_{N} \left\| \bs{\varepsilon}^k_N \right\|\\
 \textrm{where}\quad 
 b_N := \left\| \left( I_N - \Sigma_N \right)^{-1} \right\| \times \frac \kappa \delta\, .
\end{multline}
Let $\varphi : \RR\to\RR$ be an arbitrary function in $\PL(\mathbb R^2)$ with Lipschitz constant $L_\varphi$.
For a given positive integer $k$, we have 
\begin{align*} 
\frac 1N \sum_{i=1}^N \varphi(x^\star_{i}) - \int \varphi d\mu^\star 
 &= 
\frac 1N \sum_{i=1}^N 
  \left( \varphi(x^\star_{i}) - \varphi((\xi^k_i)_+) \right) 
 + \frac 1N \sum_{i=1}^N \varphi((\xi^k_i)_+) - 
    \int \varphi d\mu^\star \\
 &:= \epsilon^1_N(k) + \epsilon^2_N(k) .
\end{align*}
We first handle $\epsilon^2_N(k)$. By Proposition~\ref{prop:amp-de}, we have:
\[
\epsilon_N^2(k) \toaslong 
  \EE\, \varphi\left( \frac \kappa \delta (\sigma_k \oZ + \bar r)_+\right) - \EE\, \varphi\left( \frac \kappa \delta (\sigma \oZ + \bar r)_+\right)\, .
\]
The r.h.s. is easily bounded by a constant $C(k)$ which converges to zero as $k\to\infty$, using the fact that $\lim_k \sigma_k =\sigma$.

We now turn to $\epsilon^1_N(k)$. By Cauchy-Schwarz inequality  
\begin{eqnarray*}
\frac 1N \sum_{i=1}^N 
  \left| \varphi(x^\star_i) - \varphi((\xi^k_i)_+) \right| 
 &\le &  \frac {L_\varphi}N \sum_{i\in [N]}
  \left| x^\star_i - (\xi^k_i)_+\right| 
  \left( 1 + | x^\star_i | +  | (\xi^k_i)_+ |   \right)  \\ 
&\leq& \frac {L_\varphi}N
  \left\| \bs{x}^\star_{N} - (\bs{\xi}^k_{N})_+\right\| 
  \left( \sum_{i\in [N]} ( 1 + | x^\star_i | + | \xi^k_i)_+ |)^2  \right)^{1/2} 
   \\ 
&\leq &3L_\varphi 
  \frac{\left\| \bs{x}^\star_{N} - (\bs{\xi}^k_{N})_+\right\|}{\sqrt{N}}  
  \left( 1 + \frac{\| \bs{x}^\star_{N} \|}{\sqrt{N}} +  
   \frac{\| (\bs{\xi}^k_{N})_+ \|}{\sqrt{N}}  \right) . 
\end{eqnarray*}
Recall the bound \eqref{eq:bound-LCP} and the definition of $b_N$, then 
\[
| \epsilon^1_N(k) | \leq 3L_\varphi b_N \frac{\| \bs{\varepsilon}^k_N \|}{\sqrt{N}} 
 \left( 1 + 2 \frac{\| (\bs{\xi}^k_N)_+\|}{\sqrt{N}} +  
  b_N \frac{\| \bs{\varepsilon}^k_N \|}{\sqrt{N}} \right) . 
\]
By Assumption \ref{ass:kappa}, $b_N$ a.s. converges 
to a positive constant. By Proposition~\ref{prop:amp-de}, we furthermore have 
\[
\frac{\| (\bs{\xi}^k_N)_+\|}{\sqrt{N}} \toaslong 
 \left( \EE (\theta_k \oZ + \bar a)_+^2 \right)^{1/2}, 
\]
which is bounded in $k$. Using~\eqref{eps}, we obtain that 
$\limsup_N | \epsilon^1_N(k) |$ is bounded with probability one by a constant 
$C_1(k)$ which converges to zero as $k \to\infty$. Finally,
\[
(a.s.)\qquad \limsup_N\left| 
\frac 1N \sum_{i\in [N]} \varphi(x^\star_{i}) - \int \varphi d\mu^\star 
 \right| \leq C(k) + C_1(k) \, .
\]
Since $C(k) + C_1(k)$ can be made arbitrarily small, we have 
\[
(a.s.)\qquad  \frac 1N \sum_{i=1}^N \varphi(x^\star_{i}) 
 \xrightarrow[N\to\infty]{} \int \varphi d\mu^\star , 
\]
which ends the proof of Theorem~\ref{th:main-wigner}.

\section{Elements of proof of Theorem~\ref{th:wishart-main}} 
	\label{sec:proof-wishart}
The strategy of proof is similar to that of Theorem \ref{th:main-wigner}. The Wishart model induces differences for the design of the AMP algorithm that we describe hereafter. The full mathematical proof is a matter of careful bookkeeping of Section \ref{sec:proof-GOE}. We provide the main steps of the proof but skip many mathematical details which can be found in \cite{imane-phd}.	
\subsection{Existence and uniqueness of the solution of system \eqref{eq:sys-wishart}} This can be established as in the case of the GOE model with minor modifications and is hence skipped.  

\subsection{Design of an AMP algorithm to approximate the LCP solution}
We shall rely on the framework of asymmetric AMP as presented in \cite[Section 2.2]{fen-etal-(now)22}. Suppose that for a given $\kappa$ satisfying Assumption \ref{ass:kappa-wishart}, $(\delta,\tau^2,\gamma)$ is the unique solution of \eqref{eq:sys-wishart}. Consider the following recursive system:
\begin{subequations}
    \label{eq:recursion}
  \begin{eqnarray}
    \bs{u}^{k+1}_N &=& \frac{B_N^\T}{\sqrt{P}} \bs{v}_P^k - \frac{(\bs{u}^k_N +\bs{a}_N)_+}{\delta}\\
    \bs{v}_P^k &=& \frac{B_N}{\delta\sqrt{P}} (\bs{u}_N^k +\bs{a}_N)_+ - \frac NP \frac{\langle
    \bs{1}_{\{ \bs{u}_N^k + \bs{a}_N>0\}}\rangle_N}{\delta}  \bs{v}_P^{k-1}
    \end{eqnarray}
\end{subequations}

where $\bs{u}_N^k,\bs{u}_N^{k+1}$ are $N\times 1$ vectors and and $\bs{v}_P^{k-1}, \bs{v}_P^k$, $P\times 1$ vectors with initial conditions
$$\bs{u}_N^0=\bs{1}_N\qquad \textrm{and}\qquad \bs{v}_P^0 =\frac{B_N}{\delta \sqrt{P}}(\bs{u}_N^0 +\bs{a}_N)_+\ .
$$
The following proposition is the counterpart of Proposition \ref{prop:amp-de} for asymmetric AMP.

\begin{proposition}[consequence of Theorem~2.5 of \cite{fen-etal-(now)22}]
\label{prop-ampasym} 
    For $N,P\ge 1$, let Assumptions \ref{ass:Wishart-matrix}, \ref{ass:MP} and \ref{ass:kappa-wishart} hold true. Suppose that $\bs{a}_N \succcurlyeq 0$ is a random vector independent of $A_N$ satisfying
    $$
  (a.s.) \qquad \mu^{\bs{a}_N} \xrightarrow[\cP_2(\RR)]{N\to \infty} {\mathcal L}(\bar{a})
    $$
    and consider the recursions \eqref{eq:recursion}. Then for every fixed $k\ge 1$,
    \begin{eqnarray*}
    (a.s.)\qquad \mu^{\bs{a}_N, \bs{u}_N^1,\cdots, \bs{u}_N^k}& \xrightarrow[N,P\to\infty]{ \cP_2(\RR^{k+1})}& {\mathcal L}\left( (\bar a, U^1,\cdots, U^k) \right)\ ,\\
    (a.s.)\qquad \mu^{\bs{v}_N^0,\cdots, \bs{v}_N^{k-1}}& \xrightarrow[N,P\to\infty]{ \cP_2(\RR^{k})}& {\mathcal L}\left( (\bar a, V^0,\cdots, V^{k-1}) \right)\ ,\\
    \end{eqnarray*}
    where $(U^1,\cdots, U^k)$ is a centered Gaussian random vector independent of $\bar a$ with covariance $T^{[k]}$, and $(V^0,\cdots,V^{k-1})$ is a centered Gaussian random vector with covariance matrix $\Sigma^{[k]}$. More precisely the covariance matrices
    \begin{eqnarray*}
        T^{[k]} &=& (T_{ij})_{i,j\in [k]}\,; \qquad T_{ij} = \mathbb{E} U^i U^j\ ,\\
        \Sigma^{[k]} &=& (\Sigma_{i-1,j-1})_{i,j\in [k]}\,;\qquad \Sigma_{i-1,j-1} = \mathbb{E} V^{i-1} V^{j-1}\, 
    \end{eqnarray*}
    are defined inductively. First, let $\bar Z\sim {\mathcal N}(0,1)$ and introduce $\tau_k, \theta_k$ such that 
    $$
    V^k \eqlaw \theta_k \bar Z \qquad \textrm{and}\qquad U^k \eqlaw \tau_k \bar Z\, ,
    $$
    so that $\theta_k^2= \Sigma_{k,k}$ and $\tau_k^2=T_{kk}$. We define these quantities by induction:
    $$
    \theta_0^2 =\EE (1+\bar a)_+^2\,,\qquad \tau_{k+1}^2 =\mathbb{E} V_k^2= \theta_k^2\,,\qquad 
    \theta_{k+1}^2 =\frac c{\delta^2} \mathbb{E} (U_{k+1}+\bar a)_+^2\, .
    $$
    Now given $\Sigma^{[k]} = (\Sigma_{i-1,j-1})$, $\Sigma^{[k+1]}$ is defined by
    \begin{eqnarray*}
    \Sigma_{\ell,k} &=&\frac c{\delta^2} \mathbb{E} (U^\ell +\bar a)_+ (U^k +\bar a)_+\quad \textrm{for}\ \ell \in [k]\ ,\\
    \Sigma_{0,k} &=& \frac c{\delta^2} \mathbb{E} (1 +\bar a)_+ (U^k +\bar a)_+\ .
    \end{eqnarray*}
    Given $T^{[k]}= (T_{ij})$, $T^{[k+1]}$ is defined by
    \begin{eqnarray*}
        T_{\ell, k+1} &=& \mathbb{E} V^{\ell - 1} V^k \ =\ \Sigma_{\ell - 1,k}\quad \textrm{for}\ \ell \in [k+1]\ .\\
     \end{eqnarray*}
\end{proposition}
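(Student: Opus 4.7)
The plan is to recognize the recursion \eqref{eq:recursion} as an instance of the general rectangular (asymmetric) AMP framework of \cite[Section~2.2]{fen-etal-(now)22}, and then to read off the state evolution. Concretely, the standard form for a rectangular AMP iteration built on a $P\times N$ measurement matrix $B_N/\sqrt{P}$ alternates between an $N$-dimensional update and a $P$-dimensional update:
\begin{align*}
\bs{v}^k_P &= \frac{B_N}{\sqrt{P}}\, h_k(\bs{u}^k_N, \bs{a}_N) \;-\; \frac{N}{P}\,\ps{\partial_1 h_k(\bs{u}^k_N,\bs{a}_N)}_N \, e_{k-1}(\bs{v}^{k-1}_P),\\
\bs{u}^{k+1}_N &= \frac{B_N^{\T}}{\sqrt{P}}\, e_k(\bs{v}^k_P) \;-\; \ps{e'_k(\bs{v}^k_P)}_P \, h_k(\bs{u}^k_N,\bs{a}_N).
\end{align*}
My first step is to match \eqref{eq:recursion} with this template by choosing
\[
h_k(u,a) = \frac{(u+a)_+}{\delta}, \qquad e_k(v) = v,
\]
so that $\partial_1 h_k(u,a) = \delta^{-1}\1_{\{u+a>0\}}$ and $e'_k(v)=1$. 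The second Onsager coefficient in the template is then $\ps{e'_k(\bs{v}^k_P)}_P = 1$, which correctly produces the term $-(\bs{u}^k_N+\bs{a}_N)_+/\delta$ in \eqref{eq:recursion}. The first Onsager coefficient is $(N/P)\,\ps{\delta^{-1}\1_{\{\bs{u}^k_N+\bs{a}_N>0\}}}_N$, matching the term present in the $\bs{v}^k_P$ update.

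The second step is to verify the hypotheses of \cite[Theorem~2.5]{fen-etal-(now)22}. The nonlinearities $h_k$ and $e_k$ are Lipschitz (hence pseudo-Lipschitz of any degree $\ge 2$) uniformly in $k$. The ratio $N/P$ converges to $c\in(0,\infty)$ by Assumption~\ref{ass:MP}, and $B_N$ has i.i.d.\ Gaussian entries by Assumption~\ref{ass:Wishart-matrix}. The initialization $(\bs{u}^0_N,\bs{a}_N)$ satisfies the required convergence in $\cP_2(\RR^2)$: indeed $\bs{u}^0_N = \bs{1}_N$ gives the Dirac mass at $1$, and the assumption on $\bs{a}_N$ provides convergence to $\mcL(\bar a)$. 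Independence of $(\bs{u}^0_N,\bs{a}_N)$ and $B_N$ is given. Under these conditions the tutorial's Theorem~2.5 yields the joint convergence of $\mu^{\bs{a}_N,\bs{u}^1_N,\ldots,\bs{u}^k_N}$ and $\mu^{\bs{v}^0_P,\ldots,\bs{v}^{k-1}_P}$ in $\cP_2$ to centered Gaussian limits (independent of $(1,\bar a)$) whose covariance structure is governed by a state-evolution recursion.

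The third step is to unfold the general state evolution into the explicit form claimed. One inductively defines two covariance matrices $T^{[k]}$ and $\Sigma^{[k]}$. At initialization, the base case $V^0 \eqlaw \theta_0\,\oZ$ with $\theta_0^2=\EE(1+\bar a)_+^2$ is dictated by the linear application of $B_N/\sqrt{P}$ to the initial vector $(\bs{u}^0_N+\bs{a}_N)_+/\delta$, combined with a standard computation of the limiting second moment using Assumption~\ref{ass:Wishart-matrix} and the rescaling by $\delta$. The general state-evolution rule from the tutorial reads, for two rectangular AMP iterates,
\[
\Sigma_{k,\ell} = \frac{N}{P}\, \EE\bigl[h_k(U^k+\bar a)\, h_\ell(U^\ell+\bar a)\bigr] \xrightarrow[]{} \frac{c}{\delta^2}\,\EE(U^k+\bar a)_+(U^\ell+\bar a)_+,
\]
and $T_{k+1,\ell+1} = \EE[e_k(V^k)e_\ell(V^\ell)] = \Sigma_{k,\ell}$ because $e_k$ is the identity; this is exactly the recursion stated. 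The scalar reductions $V^k\eqlaw\theta_k\oZ$, $U^k\eqlaw\tau_k\oZ$ come from $\Sigma$, $T$ being rank-one perturbations along each new row/column, and the recursion $\theta_{k+1}^2 = (c/\delta^2)\EE(U^{k+1}+\bar a)_+^2$, $\tau_{k+1}^2=\theta_k^2$ is a direct read-off.

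The main obstacle I anticipate is the bookkeeping of the Onsager coefficients and the normalization conventions: the tutorial's convention uses $B_N/\sqrt{P}$ and a factor $N/P$ in one Onsager term only, and this must be carefully aligned with the present formulation (which scales one iteration by $1/(\delta\sqrt P)$ and uses the symmetric $1/\sqrt P$ in the other). Once the dictionary between conventions is set, the proof reduces to plugging in the specific nonlinearities and rewriting the general state-evolution formula, with no further analytic difficulty beyond what Theorem~2.5 of \cite{fen-etal-(now)22} already provides.
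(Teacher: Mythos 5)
Your proof plan matches the paper's own (implicit) argument exactly: both recognize the recursion~\eqref{eq:recursion} as an instance of the asymmetric AMP framework of \cite[Section~2.2, Theorem~2.5]{fen-etal-(now)22}, set up the dictionary $h_k(u,a) = (u+a)_+/\delta$ and $e_k(v) = v$, verify the hypotheses, and read off the state evolution. One small caveat worth a second look: the base case $\theta_0^2 = \EE(1+\bar a)_+^2$ does not actually follow from the second-moment computation you invoke, which yields $\frac{c}{\delta^2}\EE(1+\bar a)_+^2$ in line with the recursion $\theta_{k+1}^2 = \frac{c}{\delta^2}\EE(U^{k+1}+\bar a)_+^2$; this looks like an inconsistency in the proposition's statement itself rather than in your plan, but you should not assert it as an unproblematic read-off.
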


\subsubsection*{From AMP recursions to an approximate LCP solution} We introduce the following notations: 
$$
\bs{\xi}_N^k =\bs{u}_N^k +\bs{a}_N\ ,\qquad \gamma_N^k = \langle 1_{\{ \bs{\xi}_N^k>0\}} \rangle_N\, .
$$
Recall the definition of $\gamma$ solution to \eqref{eq:sys-wishart}. Performing similar computations as in Section \ref{subsec:GOE-algo}, we obtain:
\begin{equation}\label{eq:approx-FP-wishart}
\bs{\xi}_N^k +\frac{\big( \bs{\xi}_N^k\big)_+ }{\delta} = \frac{B_N^\T B_N}{\left( 1+\frac{c\gamma}{\delta}\right)\delta P} (\bs{\xi}_N^k)_+ +\bs{a}_N  + \widetilde{\bs{\varepsilon}}_N^k
\end{equation}
where 
$$
\widetilde{\bs{\varepsilon}}_N^k = \frac{B_N^\T }{\left( 1+\frac{c\gamma}{\delta}\right)\sqrt{P}}
\left( \frac{c\gamma - N/P \gamma_N^k}{\delta} \bs{v}_P^{k-1} +\frac{c\gamma}{\delta}\big( \bs{v}_P^k - \bs{v}_P^{k-1}\big) \right) +\bs{\xi}_N^k - \bs{\xi}_N^{k+1}\, .
$$
We introduce the following notations:
$$
\bs{z} = (\bs{\xi}_N^k)_+ - \frac{(\bs{\xi}_N^k)_-}{1+1/\delta}\ ,\quad \bs{r}_N =\frac{\bs{a}_N}{1+1/\delta}\ ,\quad \bs{\varepsilon}_N^k = \frac{\widetilde{\bs{\varepsilon}}_N^k}{1+1/\delta}\ .
$$
Then \eqref{eq:approx-FP-wishart} can be rewritten as 
$$
\bs{z} = \Sigma_N \bs{z}_+ + \bs{r}_N +\bs{\varepsilon}^k_N\, ,
$$
where $\Sigma_N$ is given by \eqref{eq:Wishart-matrix}. Applying Proposition \ref{prop:fixed-point-LCP}, we finally obtain that
$$
\bs{z}^+ =LCP\left( I_N - \Sigma_N, - \bs{r}_N - \bs{\varepsilon}_N^k\right)\, .
$$
The rest of the proof closely follows the corresponding part in the proof of Theorem \ref{th:main-wigner} and is omitted.


\begin{appendix}
    \section{Theorem \ref{th:main-wigner}: remaining proofs}

    \subsection{Proof of Lemma \ref{lemma:point-fixe}}\label{proof:point-fixe}
Consider the function 
$f(\sigma^2) = \EE (\sigma \oZ + \bar r)_+^2$. Then, 
Equation~\eqref{sys-omega} is equivalent to the fixed-point equation:
\begin{equation}
\label{fixed} 
\frac{f(\sigma^2)}{\delta^2} = \sigma^2.
\end{equation}     
We can prove by elementary means that 
$$
\frac{df}{d\sigma^2} (\sigma^2) = \frac 1{2\sigma} \frac{df}{d\sigma}(\sigma^2) = \frac 1\sigma \mathbb{E}
\oZ (\sigma \oZ +\bar r)_+\ .
$$
Moreover, conditioning on $\bar{r}$ and applying the integration by parts formula for the Gaussian r.v. $\oZ$ we get
$$
\frac 1\sigma \mathbb{E} \left( \oZ (\sigma \oZ +\bar r)_+\mid \bar r\right) = 
\mathbb{E}\left(  1_{\{ \sigma \oZ +\bar r\ge 0\}} \mid \bar r\right)\, .
$$
Hence 
$$
\frac{df}{d\sigma^2} (\sigma^2) = \mathbb{P}\{ \sigma \oZ +\bar r\ge 0\} 
= \mathbb{P}\{ \oZ +\bar r/\sigma \ge 0\}\, .
$$
Notice that $\frac{df}{d\sigma^2} $ is a decreasing function since 
$$
\sigma <\sigma' \qquad \Rightarrow \qquad \{ \oZ +\bar r/\sigma' \ge 0\} \subset  
\{ \oZ +\bar r/\sigma \ge 0\}\,,
$$
with 
$$\lim_{\sigma^2\to \infty} \frac{df}{d\sigma^2} (\sigma^2) =\frac 12\,.
$$
We now introduce function $g(\sigma^2)=\frac{f(\sigma^2)}{\delta^2} - \sigma^2$. Notice that $g(0)= \mathbb{E} \bar r^2 /\delta^2>0$ and that 

\begin{equation}
\label{eq:deriv}
\frac{dg}{d\sigma^2} (\sigma^2) = \frac{\mathbb{P}\{ \oZ +\bar r/\sigma \ge 0\}}{\delta^2} - 1 > \frac 1{2\delta^2} -1\, .
\end{equation}

If $\frac 1{2\delta^2} -1\ge 0$ which is equivalent to the condition $\delta<(\sqrt{2})^{-1}$ then $g$'s derivative is positive hence $g$ is increasing with a positive starting point and never vanishes.

Suppose now that $\delta> 1/\sqrt{2}$. We shall prove that $g$ vanishes at a unique point $\sigma^2(\delta)$:
\begin{equation}\label{eq:g-single-point}
g(\sigma^2(\delta))=0 \qquad \textrm{for}\qquad  \sigma^2(\delta)>0\, .
\end{equation}
Notice that the derivative $dg/d\sigma^2$ is decreasing with a negative limit at infinity $$\lim_{\sigma^2\to \infty} \frac{dg}{d\sigma^2}(\sigma^2)= \frac 1{2\delta^2} - 1<0\, .$$

Depending on the sign of the value of $dg/d\sigma^2$ at zero, either $g$ is constantly decreasing from the positive value $g(0)$ or $g$ is first increasing then eventually decreasing. We now prove that 
\begin{equation}\label{eq:g-infinity}
\lim_{\sigma^2\to \infty} g(\sigma^2) <0\, .
\end{equation}
This will yield \eqref{eq:g-single-point}. 
$$
\frac{g(\sigma^2)}{\sigma^2} = \frac{\mathbb{E} (\sigma \oZ +\bar r)^2_+ }{\delta^2 \sigma^2} - 1
=  \frac{\mathbb{E} (\oZ +\bar r/\sigma )^2_+ }{\delta^2} - 1 \xrightarrow[\sigma^2\to\infty]{} \frac 1{2\delta^2}-1<0\, .
$$
Hence $g$'s limit is $-\infty$ at infinity. Eq. \eqref{eq:g-infinity} is proved, so is \eqref{eq:g-single-point}. The first statement of the lemma is proved.\\

We now address the second point of the lemma. Let $\delta > 1/\sqrt 2$ be fixed.
From the previous analysis, we know that 
$$  \frac{dg}{d\sigma^2}\bigg|_{\sigma^2=\sigma^2(\delta)} <0.$$
From \eqref{eq:deriv}, one can compute
$$
 \frac{dg}{d\sigma^2}\bigg|_{\sigma^2=\sigma^2(\delta)} = \frac{\mathbb{P}\{ \sigma(\delta) \oZ +\bar r\ge 0\}}{\delta^2} - 1, $$
 and this gives the second point :
 $$\mathbb{P}\{\sigma(\delta) \oZ +\bar r\ge 0\} < \delta^2\, . $$

We now address the third point of the lemma. Consider a sequence $(\sigma_t)$ such that
$$  \sigma_0^2>0 \textrm{ and }
    \sigma_{p+1}^2 = \frac 1{\delta^2} f(\sigma_p^2)\,.
$$
One can easily  prove that $\sigma_p^2 \uparrow_p \sigma^2(\delta)$ (resp. $\sigma_p^2 \downarrow \sigma^2(\delta)$) if $\sigma_0^2< \sigma^2(\delta)$ (resp. $\sigma_0^2> \sigma^2(\delta)$). The sequence remains constant if $\sigma^2_0=\sigma^2(\delta)$. Lemma \ref{lemma:point-fixe} is proved.

\subsection{Proof of Lemma \ref{lemma:aligned}} \label{proof:aligned}
\begin{proof} 
Let $( X_1, X_2 )$ be a centered Gaussian vector with covariance matrix $\Gamma(X_1,X_2)$ given by
$$
\Gamma(X_1,X_2) = \begin{pmatrix}
1&q\\
q&1
\end{pmatrix}\qquad \textrm{with}\qquad q\in [0,1]\ . 
$$
Let $W$ be
a (real) random variable independent of $( X_1, X_2 )$ with finite second moment $\mathbb{E} W^2 <\infty.$ Consider the function 
${\mathcal H}: [0,1] \to [0,1]$ defined as 
\[
q\ \longmapsto \ {\mathcal H}(q) \ = \ \frac{\EE (X_1 + W)_+ (X_2 + W)_+}{\EE (X_1 + W)_+^2}\, . 
\]
It is shown in \cite[Lemma~38 and proof of Lemma~37]{mon-ric-16} that ${\mathcal H}$ is
a continuous increasing function on $[0,1]$ such that 
$$
{\mathcal H}(q) >q \quad \textrm{for all} \quad q<1 \qquad \textrm{and} \qquad {\mathcal H}(1) = 1\, .
$$

Let $Z^k$ be defined in Proposition \ref{prop:amp-de}, $\theta_k$ in \eqref{eq:DE-intermediaire} and $Q_k$ in \eqref{eq:corr-coefficient}. Writing $Z^k = \theta_k \bar Z^k$ where ${\mathcal L}\left( \bar{Z}^k\right) = {\mathcal N}(0,1)$, notice that 
$$
\mathrm{Cov}\left(\bar Z^k, \bar Z^{k-1}\right) = Q_k\ .
$$
We have 
\begin{eqnarray*}
Q_{k+1} &= &\frac{\EE Z^k Z^{k+1}}{\theta_k \theta_{k+1}} = 
 \frac{\EE (\theta_{k-1} \bar Z^{k-1} + \bar a)_+  
           (\theta_{k} \bar Z^{k} + \bar a)_+} 
   {\sqrt{\EE (\theta_{k-1} \bar Z^{k-1} + \bar a)_+^2 
    \EE (\theta_{k} \bar Z^{k} + \bar a)_+^2   }} \ ,  \\
&=& \frac{\EE (\bar Z^{k-1} + \bar a / \theta_{k-1})_+  
           (\bar Z^{k} + \bar a /\theta_k )_+} 
   {\sqrt{\EE (\bar Z^{k-1} + \bar a / \theta_{k-1})_+^2 
    \EE (\bar Z^{k} + \bar a /\theta_k)_+^2   }} \ .
\end{eqnarray*}
Notice that the last expression only depends on $\theta_{k-1}$, $\theta_k$ and $Q_k$, the covariance between $\bar Z^k$ and $\bar Z^{k-1}$. We thus introduce the following function
$$
 H(Q_k, \theta_{k-1}, \theta_k) \quad =\quad \frac{\EE (\bar Z^{k-1} + \bar a / \theta_{k-1})_+  
           (\bar Z^{k} + \bar a /\theta_k )_+} 
   {\sqrt{\EE (\bar Z^{k-1} + \bar a / \theta_{k-1})_+^2 
    \EE (\bar Z^{k} + \bar a /\theta_k)_+^2   }} \ .
$$
The function $H$ is continuous. Combining Eq. \eqref{eq:lien-sigma-theta} and the convergence of $\sigma_k$, denote by $\theta_\infty= \frac{\kappa}{\delta} \sigma$ where $\sigma$ satisfies \eqref{sys-omega}. If we set 
$W = \bar a / \theta_\infty$ in the definition of ${\mathcal H}$ above, then 
$$
{\mathcal H} (q) = H(q,\theta_\infty, \theta_\infty)\, .
$$ 
The lemma is established if we prove that $Q_\star := \liminf_k Q_k$
satisfies $Q_\star = 1$. Let us first show that $\liminf {\mathcal H}(Q_k) \geq
{\mathcal H}(Q_\star)$. If $Q_\star=0$, then $Q_k\ge Q_\star$ and since ${\mathcal H}$ is increasing we have  $\liminf {\mathcal H}(Q_k) \geq {\mathcal H}(Q_\star)$. It is thus sufficient to assume that $Q_\star > 0$.  For each
$\varepsilon > 0$, $Q_k \geq Q_\star - \varepsilon$ for all $k$ large enough.
Thus, ${\mathcal H}(Q_k) \geq {\mathcal H}(Q_\star - \varepsilon)$ for all $k$ large, which
implies that $\liminf {\mathcal H}(Q_k) \geq {\mathcal H}(Q_\star - \varepsilon)$. Since
$\varepsilon > 0$ is arbitrary, we have $\liminf {\mathcal H}(Q_k) \geq {\mathcal H}(Q_\star)$.
With this, we have 
\begin{eqnarray*}
Q_\star &=& \liminf_k H(Q_k, \theta_{k-1}, \theta_k) \ \stackrel{(a)}= \ \liminf_k H(Q_k, \theta_{\infty}, \theta_\infty)\ = \ \liminf_k {\mathcal H}(Q_k)\,, \\
 &\geq& {\mathcal H}(Q_\star) ,  
\end{eqnarray*} 
where $(a)$ follows from the continuity of $H$. By ${\mathcal H}$'s properties, this implies that $Q_\star = 1$. 
\end{proof} 
\section{Elements of proof for Theorems \ref{th:univ-goe} and \ref{th:asym-univ} (universality)}\label{app:universality}
We provide hereafter arguments to complete proofs of Theorems \ref{th:univ-goe} and \ref{th:asym-univ} based on what has already been developed in the proofs of Theorems \ref{th:main-wigner} and \ref{th:wishart-main} and on various results available in the literature. 

\begin{proof}[Proof of Theorem \ref{th:univ-goe}]
We just need to prove that Proposition~\ref{prop:amp-de} above remains true
when Assumptions~\ref{ass:goe} and~\ref{ass:r} are replaced with 
Assumptions~\ref{ass:univ-goe} and~\ref{ass:r-strong} respectively. 
This is a direct application of \cite[Theorem~2.4]{wan-zho-fan-(arxiv)22}. 
\end{proof} 

\begin{proof}[Proof of Theorem \ref{th:wishart-main}] 
We only need to prove that Proposition~\ref{prop-ampasym} remains true with the
assumptions of Theorem \ref{th:asym-univ}. To that end, it is enough to
notice that \cite[Theorem~2.5]{fen-etal-(now)22}, from which
Proposition~\ref{prop-ampasym} follows directly, can in turn be cast in the
framework of the AMP algorithm for GOE matrices~\eqref{amp-goe}, thanks to the
embedding of Javanmard and Montanari described in~\cite{jav-mon-13}.  Indeed,
Assumptions~\ref{ass:univ-wish} and~\ref{ass:r-strong} used in conjuction with
this embedding provide a version of Algorithm~\eqref{amp-goe} that enters the
framework of \cite[Theorem~2.4]{wan-zho-fan-(arxiv)22}. This leads to
Proposition~\ref{prop-ampasym}. 
\end{proof}

\end{appendix}


\def\cprime{$'$} \def\cdprime{$''$} \def\cprime{$'$} \def\cprime{$'$}
  \def\cprime{$'$} \def\cprime{$'$}

\end{document}